\theoremstyle{theorem}
\newtheorem{theorem}{Theorem}
\newtheorem{lemma}[theorem]{Lemma}
\newtheorem{corollary}[theorem]{Corollary}
\theoremstyle{definition}
\newtheorem{definition}{Definition}
\theoremstyle{remark}
\newtheorem{remark}{Remark}
\begin{document}
%
\title{Construction of Subexponential-Size Optical Priority Queues with Switches and Fiber Delay Lines}
%
%
%

\author{Bin~Tang,~\IEEEmembership{Member,~IEEE,}
        Xiaoliang~Wang,~\IEEEmembership{Member,~IEEE,}
        Cam-Tu Nguyen,~\IEEEmembership{Member,~IEEE,}
        Baoliu~Ye,~\IEEEmembership{Member,~IEEE,}
        and~Sanglu~Lu,~\IEEEmembership{Member,~IEEE}
\thanks{A preliminary version of the paper has appeared in IEEE International Symposium on Information Theory, Barcelona, Spain, July 10-15, 2016~\cite{tang2016constructing}. }
\thanks{B. Tang, X. Wang, C.-T. Nguyen, B. Ye, and S. Lu are with the National Key Laboratory for Novel Software Technology, Nanjing University, Nanjing 210023, China (e-mail: \{tb, waxili, ncamtu, yebl, sanglu\}@nju.edu.cn).}}

%
%

\markboth{IEEE/ACM Transactions on Networking, manuscript}%
{Tang \MakeLowercase{\textit{et al.}}: Construction of Subexponential-Size Optical Priority Queues with Switches and Fiber Delay Lines}
%



\maketitle

\begin{abstract}

All-optical switching has been considered as a natural choice to keep pace with growing fiber link capacity. One key research issue of all-optical switching is the design of optical buffers for packet contention resolution. One of the most general buffering schemes is optical priority queue, where every packet is associated with a unique priority upon its arrival and departs the queue in order of priority, and the packet with the lowest priority is always dropped when a new packet arrives but the buffer is full. In this paper, we focus on the feedback construction of an optical priority queue with a single $\boldsymbol{(M+2)\times (M+2)}$ optical crossbar Switch and $\boldsymbol{M}$ fiber Delay Lines (SDL) connecting $\boldsymbol{M}$ inputs and $\boldsymbol{M}$ outputs of the switch. We propose a novel construction of an optical priority queue with buffer $\boldsymbol{2^{\Theta(\sqrt{M})}}$, which improves substantially over all previous constructions that only have buffers of $\boldsymbol{O(M^c)}$ size for constant integer $\boldsymbol{c}$. The key ideas behind our construction include (i) the use of first in first out multiplexers, which admit efficient SDL constructions, for feeding back packets to the switch instead of fiber delay lines, and (ii) the use of a routing policy that is similar to self-routing, where each packet entering the switch is routed to some multiplexer mainly determined by the current ranking of its priority.

\end{abstract}

\begin{IEEEkeywords}
Optical priority queue, optical switch, fiber delay lines, optical multiplexer
\end{IEEEkeywords}

%
\IEEEpeerreviewmaketitle

\section{Introduction}
\label{sec:introduction}
%

\IEEEPARstart{A}{ll-optical} packet switching is very attractive for making a good use of the enormous bandwidth of optical networks, since it eliminates the complicated and quite expensive optical-electrical-optical conversions. One main issue for implementing all-optical packet switching is the construction of optical buffers for conflict resolutions among packets competing for the same resources. As optical-RAM is not available yet, a common approach for constructing optical buffers is to use a combination of bufferless optical crossbar Switches and fiber Delay Lines (SDLs), where fiber delay lines (FDLs) act as storage devices for optical packets~\cite{karol1993shared,chlamtac1996cord,cruz1996cod,hunter1998slob}. However, unlike the traditional electronic memories with random access, one packet entering an FDL must propagate for a fixed amount of time and cannot be retrieved anytime earlier. Such inflexibility makes the design of SDL-based optical buffers with the same throughput and delay performance as its electronic counterpart quite challenging. In the past one decade and a half, great efforts have been made on constructing various kinds of optical buffers, such as first in first out (FIFO) multiplexers~\cite{chang2004recursive,chang2006using,chou2006necessary,cheng2007constructions,chen2007feedforward,cheng2008constructions,cheng2017greedy}, FIFO queues~\cite{chang2006constructions,li2011mux,huang2007recursive,cheng2013necessary}, last in first out (LIFO) queues~\cite{huang2007recursive,small2007modular,wang2011efficient}, priority queues~\cite{sarwate2006exact,chiu2007simple,chiu2007using,kogan2007optimal,cheng2011constructions,datta2017construction}, and shared queues~\cite{wang2009construction,wang2012constructing}, etc.

In this paper, we focus on the design of optical \emph{priority queues} with SDLs. A priority queue contains an arrival link, a departure link, and a loss link. Each packet is associated with a unique priority upon its arrival. When a departure request is raised by a controller, the packet with the highest priority is sent out from the departure link. If a new packet arrives but the buffer of the priority queue is full, then the packet with the lowest priority is dropped via the loss link. Priority queue is one of the most general buffering schemes, as the priority of each packet can be assigned arbitrarily. In particular, both FIFO queues and LIFO queues can be viewed as priority queues where the arrival time of a packet is used as its priority.

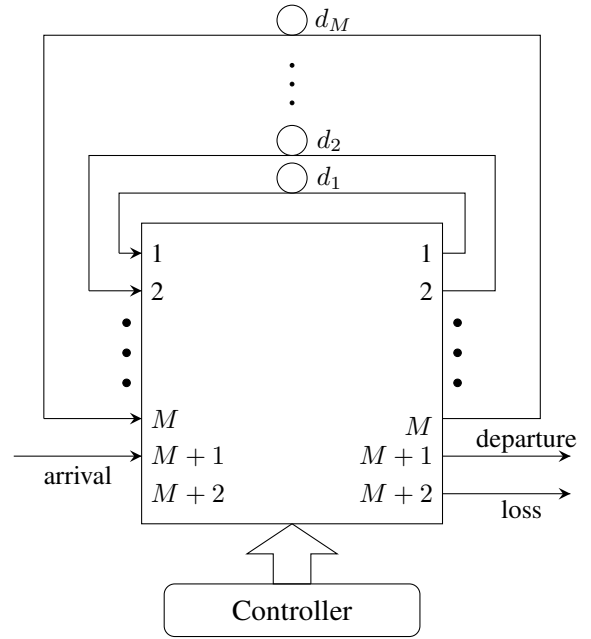
\begin{figure}[!tb]
    \centering
       \begin{tikzpicture}[scale=1]
    	\tikzset{myptr/.style={decoration={markings,mark=at position 1 with %
   			{\arrow[scale=1.4,>=stealth]{>}}},postaction={decorate}}}
        \draw (3.5,3.5) rectangle (7.5,7.5);

        \draw[rounded corners] (3.8,2) rectangle (7.2,2.7);
        \node[scale=1.1] at (5.5,2.35){Controller};

        \draw (5.5,3.5)--(6.1,3.1)--(5.75,3.1)--(5.75,2.7);
        \draw (5.5,3.5)--(4.9,3.1)--(5.25,3.1)--(5.25,2.7)--(5.75,2.7);

        \node[right] at (3.5,7.1){1};
        \node[right] at (3.5,6.6){2};
        \node[right] at (3.5,4.9){$M$};
        \node[right] at (3.5,4.4){$M+1$};
        \draw[myptr](1.8,4.4)--node [pos=0.5,below,sloped]{arrival}(3.5,4.4);
        \node[right] at (3.5,3.9){$M+2$};
        \draw [fill] (3.3,6.175) circle [radius=0.05];
        \draw [fill] (3.3,5.775) circle [radius=0.05];
        \draw [fill] (3.3,5.375) circle [radius=0.05];

        \node[left] at (7.5,7.1){1};
        \node[left] at (7.5,6.6){2};
        \node[left] at (7.5,4.8){$M$};
        \node[left] at (7.5,4.4){$M+1$};
        \draw[myptr](7.5,4.4)--(9.2,4.4);
        \node [left] at (9.4,4.6){departure};
        \node[left] at (7.5,3.9){$M+2$};
        \draw[myptr](7.5,3.9)--(9.2,3.9);
        \node [left] at(8.95,3.7){loss};

        \draw [fill] (7.7,6.175) circle [radius=0.05];
        \draw [fill] (7.7,5.775) circle [radius=0.05];
        \draw [fill] (7.7,5.375) circle [radius=0.05];

        \draw (7.5,7.1)--(7.8,7.1)--(7.8,7.9)--(3.2,7.9)--(3.2,7.1);
        \draw[myptr] (3.2,7.1)--(3.5,7.1);
        \draw (5.5,8.1) circle [radius=0.2];
        \node at(6,8.1){$d_1$};

        \draw (7.5,6.6)--(8.2,6.6)--(8.2,8.4)--(2.8,8.4)--(2.8,6.6);
        \draw[myptr] (2.8,6.6)--(3.5,6.6);
        \draw (5.5,8.6) circle [radius=0.2];
        \node at(6,8.6){$d_2$};

        \draw (7.5,4.9)--(8.8,4.9)--(8.8,10)--(2.2,10)--(2.2,4.9);
        \draw[myptr] (2.2,4.9)--(3.5,4.9);
        \draw (5.5,10.2) circle [radius=0.2];
        \node at(6.05,10.2){$d_M$};

        \draw [fill] (5.5,9.1) circle [radius=0.025];
        \draw [fill] (5.5,9.35) circle [radius=0.025];
        \draw [fill] (5.5,9.6) circle [radius=0.025];
    \end{tikzpicture}
    \caption{Construction of an optical priority queue with an $(M+2)\times (M+2)$ optical crossbar switch and $M$ fiber delay lines with delays $d_1,d_2,\ldots,d_M$.}
    \label{fig:general}
\end{figure}

Following previous works~\cite{sarwate2006exact,chiu2007simple,chiu2007using,datta2017construction}, we consider the construction of an optical priority queue using a feedback system as illustrated in Fig.~\ref{fig:general}. This system consists of an $(M+2)\times (M+2)$ optical crossbar switch, which has one distinguished input for external packet arriving, one distinguished output for packet departure, one distinguished output for packet loss, and $M$ FDLs with delays $d_1, d_2,\ldots,d_M$ connecting the other inputs and outputs in pairs. The issue is to choose proper delays $d_1, d_2,\ldots, d_M$ as well as the routing policy performed by the switch, such that the switching system can exactly emulate a priority queue.

All the arrival time and priorities of packets and the packet departure requests can be arbitrary, making the optical priority queue highly dynamic. This leads to the design of delays of FDLs and the routing policy in a coupled way very difficult. In particular, there are two basic necessary conditions for the routing policy:
\begin{itemize}
  \item \emph{Delay} condition: a packet with the $i$-th highest priority cannot be switched into an FDL with delay higher than $i$.
  \item \emph{Collision-free} condition, i.e., for any time and any FDL, there must be at most one packet entering the FDL.
%
\end{itemize}
%
%
Based on these conditions, Sarwate and Anatharam~\cite{sarwate2006exact} showed that the buffer size is upper bounded by $2^M+1$.
To accommodate the conditions, they introduced a routing policy based on sorting the priorities of the packets entering the switch. Proper delays were further assigned to the FDLs, which leads to the first construction of optical priority queue with buffer $\Theta(M^2)$~\cite{sarwate2006exact}. This sorting-based routing policy plays a vital role in all the subsequent constructions of optical priority queues, including the ones by Chiu \emph{et al.} in \cite{chiu2007simple} and \cite{chiu2007using} whose buffer sizes are $\Theta (M^2)$ and $\Theta (M^3)$, respectively, and the recursive construction by Datta~\cite{datta2017construction},
which can achieve a buffer size of $\Theta(M^c)$ for any positive integer $c$.
\footnote{Datta's work~\cite{datta2017construction} and our preliminary version of this work~\cite{tang2016constructing} firstly appeared at almost the same time.} However, all these buffer sizes achieved are polynomial in $M$, which are far away from the exponential upper bound $2^M+1$.

In this paper, we make a great step towards closing the above gap by presenting a novel construction of an optical priority queue with buffer $2^{\Theta(\sqrt{M})}$. To the best of our knowledge, this is the first construction of an optical priority queue whose buffer size goes beyond polynomials of the number of FDLs $M$.
The key ideas behind our construction include two aspects.

\begin{itemize}
  \item As illustrated in Fig.~\ref{fig:basicconstruction}, we use (FIFO) multiplexers for feeding back optical packets to the switch instead of the direct use of FDLs. A multiplexer has multiple input links for packet arrivals, one output link for packet departure, and some other output links for packet loss. It allows multiple packets to arrive simultaneously, and at each time slot there is always a packet departing in the FIFO order whenever the multiplexer is nonempty. Although a multiplexer with $\tilde{B}$ buffer needs a crossbar switch and $O(\log \tilde{B})$ FDLs for construction~\cite{chang2004recursive}, the collision-free condition can be relaxed when replacing FDLs with multiplexers, since each multiplexer can accept the entrance of multiple packets simultaneously, which brings extra room for the design of routing policy. On the other hand, the use of multiplexers imposes an additional condition on the routing policy that buffer overflow cannot happen at any multiplexer. Nevertheless, we only need to guarantee that the number of packets buffered at a multiplexer cannot exceed the buffer size of the multiplexer, since the buffer space of a multiplexer is always used in a consecutive manner.

  \item We introduce a novel routing policy that is similar to self-routing~\cite{chang2004recursive}, where each packet entering the switch is routed to some multiplexer mainly determined by the current ranking of its priority according to a simple routing rule. Compared to the sorting-based routing policy used in all previous constructions, our routing policy also incurs a lower computation cost.
\end{itemize}

\begin{figure}[!tb]
    \centering
\begin{tikzpicture}[scale=0.9]
    	\tikzset{myptr/.style={decoration={markings,mark=at position 1 with
   			{\arrow[scale=1.4,>=stealth]{>}}},postaction={decorate}}}

        \draw (3.5,3.5) rectangle (7.5,7.5);
        \node[scale=1.5] at (5.5,5.5){Switch};

        \draw[rounded corners] (3.8,2) rectangle (7.2,2.7);
        \node[scale=1.1] at (5.5,2.35){Controller};

        \draw (5.5,3.5)--(6.1,3.1)--(5.75,3.1)--(5.75,2.7);
        \draw (5.5,3.5)--(4.9,3.1)--(5.25,3.1)--(5.25,2.7)--(5.75,2.7);

        \draw[myptr](1.6,4.1)--node [pos=0.5,below,sloped]{arrival}(3.5,4.1);

        \draw[myptr](7.5,4.1)--(9.5,4.1);
        \node [left] at(9.45,4.3){departure};

        \draw[myptr](7.5,3.7)--(9.5,3.7);
        \node [left] at(9.1,3.5){loss};

        \draw (4.5,7.7) rectangle (6.5,8.5);
        \node[scale=0.9,align=center] at (5.5,8.1){
        	multiplexer};
        \foreach \x in {1,4}
        {
        	\draw (7.5,7.6-0.2*\x)--(7.6+0.1*\x,7.6-0.2*\x)--(7.6+0.1*\x,7.6+0.2*\x);
        	\draw[myptr] (7.6+0.1*\x,7.6+0.2*\x)--(6.5,7.6+0.2*\x);
        }
    	\draw (4.5,8.1)--(3,8.1)--(3,7.1);
    	\draw[myptr] (3,7.1)--(3.5,7.1);
    \draw [fill] (7,8.1) circle [radius=0.015];
    \draw [fill] (7,8) circle [radius=0.015];
    \draw [fill] (7,8.2) circle [radius=0.015];

    \draw (4.5,8.7) rectangle (6.5,9.5);
    \node[scale=0.9,align=center] at (5.5,9.1){
    	multiplexer};
    \foreach \x in {1,4}
    {
    	\draw (7.5,6.8-0.2*\x)--(8.1+0.1*\x,6.8-0.2*\x)--(8.1+0.1*\x,8.6+0.2*\x);
    	\draw[myptr] (8.1+0.1*\x,8.6+0.2*\x)--(6.5,8.6+0.2*\x);
    }
	\draw (4.5,9.1)--(2.5,9.1)--(2.5,6.3);
	\draw[myptr] (2.5,6.3)--(3.5,6.3);
 \draw [fill] (7,9.1) circle [radius=0.015];
    \draw [fill] (7,9) circle [radius=0.015];
    \draw [fill] (7,9.2) circle [radius=0.015];
    \draw [fill] (3.3,5.8) circle [radius=0.025];
    \draw [fill] (3.3,5.6) circle [radius=0.025];
    \draw [fill] (3.3,5.4) circle [radius=0.025];

    \draw [fill] (7.7,5.8) circle [radius=0.025];
    \draw [fill] (7.7,5.6) circle [radius=0.025];
    \draw [fill] (7.7,5.4) circle [radius=0.025];

    \draw [fill] (5.5,9.7) circle [radius=0.025];
    \draw [fill] (5.5,9.9) circle [radius=0.025];
    \draw [fill] (5.5,10.1) circle [radius=0.025];

    \draw [fill] (8.8,7.6) circle [radius=0.025];
    \draw [fill] (9,7.6) circle [radius=0.025];
    \draw [fill] (9.2,7.6) circle [radius=0.025];

    \draw [fill] (2.1,7.6) circle [radius=0.025];
    \draw [fill] (1.9,7.6) circle [radius=0.025];
    \draw [fill] (1.7,7.6) circle [radius=0.025];

    \draw (4.5,10.3) rectangle (6.5,11.1);
    \node[scale=0.9,align=center] at (5.5,10.7){
    	multiplexer};
    \foreach \x in {1,4}
    {
    	\draw (7.5,5.4-0.2*\x)--(9.4+0.1*\x,5.4-0.2*\x)--(9.4+0.1*\x,10.2+0.2*\x);
    	\draw[myptr] (9.4+0.1*\x,10.2+0.2*\x)--(6.5,10.2+0.2*\x);
    }
	\draw (4.5,10.7)--(1.2,10.7)--(1.2,4.9);
	\draw[myptr] (1.2,4.9)--(3.5,4.9);
 \draw [fill] (7,10.6) circle [radius=0.015];
    \draw [fill] (7,10.8) circle [radius=0.015];
    \draw [fill] (7,10.7) circle [radius=0.015];
    \end{tikzpicture}

     \caption{Illustration of the multiplexer based construction of an optical priority queue. Here the loss links of multiplexers are omitted.}
      \label{fig:basicconstruction}
\end{figure}
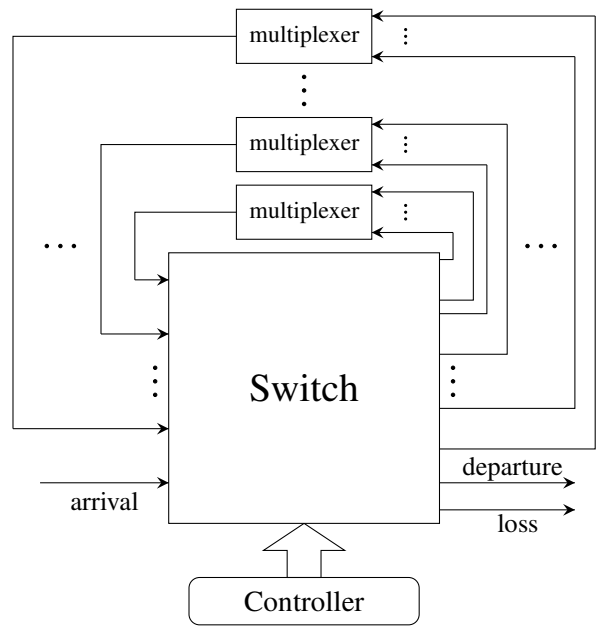

Specifically, we adopt 4-to-1 multiplexers and use them in groups each of which consists of three same 4-to-1 multiplexers. By using an exponential sequence for setting the buffer sizes of multiplexers and an appropriate routing rule, we can guarantee that neither packet collision nor buffer overflow could happen at each multiplexer. Based on these salient properties, we show that our construction emulates a priority queue exactly.  Although our construction uses multiple switches, we can combine all the switches into one, and finally have a construction of an optical priority queue with buffer $2^{\Theta(\sqrt{M})}$ using a single crossbar switch and $M$ fiber delay lines.

The remainder of this paper is organized as follows. In Sec.~\ref{sec:preliminaries}, we introduce the basic assumptions and definitions used throughout this paper. In Sec.~\ref{sec:construction}, we present a very efficient construction of optical priority queues while the proof is given in Sec.~\ref{sec:proof}. Sec.~\ref{sec:relatedwork} discusses about related work. Finally, Sec.~\ref{sec:conclusion} presents the concluding remarks.

\section{Preliminaries}
\label{sec:preliminaries}

In this section, we first introduce the basic assumptions and network elements adopted in this paper and then introduce the definition of priority queue.

\subsection{Assumptions and Basic Network Elements}
As in most work about the SDL-based optical queue designs~\cite{chang2004recursive,chang2006using,chou2006necessary,cheng2007constructions,chen2007feedforward,
cheng2008constructions,cheng2017greedy,chang2006constructions,li2011mux,huang2007recursive,cheng2013necessary,small2007modular,wang2011efficient, sarwate2006exact,chiu2007simple,chiu2007using,kogan2007optimal,cheng2011constructions,datta2017construction,wang2009construction,wang2012constructing},
we assume that the time of system is slotted and synchronized, and the packet size is fixed such that one packet can be transmitted over a link within one time slot. Since there is at most one packet in a link, we can use 0-1 variables to characterize the state of a link. We say that a link is in state 1 at time $t$ if there is a packet in the link at $t$, and the link is in state 0 at $t$ otherwise.

Switches and fiber delay lines are defined as follows.

\begin{definition}[\textbf{Switch}]
  An $n\times n$ (optical) crossbar switch is a \emph{memoryless} network element that has $n$ input links and $n$ output links, which can realize all the $n!$ permutations between its inputs and outputs. Specifically, for any $k$, $k\leq n$, packets coming from any $k$ input links will instantaneously go out from $k$ output links which are specified by a protocol performed by the switch. We will refer to $n$ as the size of the switch and the protocol as the \emph{routing policy} of the switch.
\end{definition}

\begin{definition}[\textbf{Fiber delay line, FDL}]
A fiber delay line with delay $d$ (a non-negative integer) is a network element that has one input link and one output link, through which $d$ time slots are required for a packet to traverse. Let $a(t)$ denote the state of the input link at time $t$. Then the state of the output link at $t$ is $a(t-d)$.
\end{definition}

When a packet is traversing through an FDL, it looks like that the packet is buffered in the FDL. Therefore, an FDL can be viewed as a memory device, but it is much more inflexible than traditional electronic memory since at most one packet can enter the FDL at one time slot and a packet entering the the FDL can only be retrieved after a fixed amount of time.

\subsection{Priority Queues}

Consider the network element shown in Fig.~\ref{fig:priorityQ}, which has an input link for packet arrival, one controller, and two output links, one for departing packets, and the other for loss packets. Every packet arriving at the network element is associated with a unique label, called \emph{priority}, which is used to indicate the expected departure order of this packet among all the buffered packets. Suppose there are $k$ packets at the beginning of time $t$, including the arriving packet if any, in the switching system. If a packet $i$ has the $j$-th highest priority among the $k$ packets, we say that $i$ has a tag of $j$ at time $t$, which is denoted by $\tau_i(t)=j$. Hence, a packet having a smaller tag has a higher priority than a packet having a larger tag at any time. However, the tag of a packet buffered in the system can change over time due to the arrival and departure of other packets.

 We use the following notations to describe the state of the network element at each time $t$.
\begin{itemize}
  \item Let $a(t)$, $d(t)$ and $l(t)$ denote the states of the input link, the departure link and the loss link at time $t$, respectively.
  \item Let $c(t)=1$ if the controller sends a departure request at time $t$ and $c(t)=0$ otherwise.
  \item Denote by $q(t)$ the number of packets buffered in the network element at time $t$.
\end{itemize}

A discrete-time priority queue can then be defined formally as follows.
\begin{definition}[\textbf{Priority Queue}]
\label{def:pq}
Starting empty at time 0, the network element shown in Fig.~\ref{fig:priorityQ} is called a priority queue with buffer $B$ if it satisfies all the following properties at each time $t>0$:
  \begin{enumerate}
\item[(P1)] \emph{Flow conservation}: arriving packets are either stored in the network element or transmitted through the departure link or the loss link, i.e.,
\begin{equation}
  q(t)=q(t-1)+a(t)-d(t)-l(t).
\end{equation}
\item[(P2)] \emph{Non-idling}: If there are packets buffered in the network element or there is an arriving packet, then there is a packet departing from the network element if and only if the controller sends a departure request, i.e.,
    \begin{equation}
      d(t)=
      \begin{cases}
        1 & \text{if }c(t)=1 \text{ and }q(t-1)+a(t)>0\\
        0 & \text{otherwise.}
      \end{cases}
    \end{equation}

\item[(P3)] \emph{Maximum buffer usage}: There is a packet dropped out from the loss link  if and only if there is no departure request, the buffer is full and there is an arriving packet, i.e.,
\begin{equation}
  l(t)=
  \begin{cases}
    1 & \text{if }c(t)=0, q(t-1)=B \text{ and }a(t)=1\\
    0 & \text{otherwise.}
  \end{cases}
\end{equation}

\item[(P4)] \emph{Priority departure}: If there is a departure packet $i$ at time $t$, then $i$ must have the highest priority among all the packets buffered in the network element and the arriving packet (if any) at time $t$, i.e.,
    \begin{equation}
      \tau_i(t)=1.
    \end{equation}

\item[(P5)] \emph{Priority loss}: If there is a loss packet $i$ at time $t$, then $i$ much have the lowest priority among all the $B$ packets buffered in the network element and the arriving packet at time $t$, i.e.,
    \begin{equation}
      \tau_i(t)=B+1.
    \end{equation}
\end{enumerate}
\end{definition}

If a priority queue is constructed with optical crossbar switches and FDLs, we say that it is an \emph{optical priority queue}. In this paper, we focus on the construction of optical priority queues with a single optical crossbar switch and $M$ FDLs as shown in Fig.~\ref{fig:general}. The efficiency of a construction is evaluated by the buffer size of the constructed optical priority queue in terms of $M$.

\begin{figure}
   \centering
\begin{tikzpicture}[scale=1]
    	\tikzset{myptr/.style={decoration={markings,mark=at position 1 with {\arrow[scale=1.0,>=stealth]{>}}},postaction={decorate}}}

    \draw (4,4) rectangle (8,5.6);
    \node[scale=1.1] at(6,4.8){priority queue buffer};

    \draw[myptr](2.5,4.8)--node [pos=0.5,above,sloped]{arrival}(4,4.8);

    \draw[myptr](8,5.1)--(9.5,5.1);
    \node[scale=1,right] at(8.1,5.35) {departure};

    \draw[myptr](8,4.5)--(9.5,4.5);
    \node[scale=1,right] at(8.4,4.3) {loss};

    \draw(4,5.7)--(4,6);
    \draw[myptr](5.5,5.85)--(4,5.85);
    \node at(6,5.85) {$\textbf{\textit{B}}$};
    \draw[myptr](6.5,5.85)--(8,5.85);
    \draw(8,5.7)--(8,6);

    \draw[rounded corners] (4.6,3) rectangle (7.4,3.5);
    \node[scale=0.9] at (6,3.25){Controller};

    \draw (6,4)--(6.4,3.7)--(6.15,3.7)--(6.15,3.5);
    \draw (6,4)--(5.6,3.7)--(5.85,3.7)--(5.85,3.5);

    \end{tikzpicture}
    \caption{A priority queue with $B$ buffer.}
    \label{fig:priorityQ}
\end{figure}
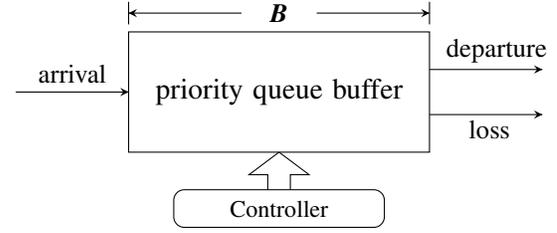

\subsection{Multiplexers}

Our construction of optical priority queue will use FIFO multiplexers as intermediate building blocks. For the sake of completeness, we give a formal definition of multiplexers.

\begin{definition}[\textbf{Multiplexer}]
  An $n$-to-1 (FIFO) multiplexer with buffer $\tilde{B}$ is a network element with $n$ input links, one departure link, and $n-1$ output links for packet losses. Let $\tilde{a}_i(t)$, $i=1,2,\ldots,n$, be the state of the $i$-th input link, $\tilde{d}(t)$ be the state of the departure link and $\tilde{l}_i(t)$, $i=1,2,\ldots,n-1$, be the state of the $i$-th loss link, and $\tilde{q}(t)$ be the number of packets buffered at the multiplexer at time $t$. The $n$-to-1 multiplexer with buffer $\tilde{B}$ satisfies the following four properties.
  \begin{itemize}
    \item [(M1)] Flow conservation: arriving packets from the $n$ input
      links are either stored in the buffer or transmitted through the $n$ output links, i.e.,
    \begin{equation}
      \tilde{q}(t)=\tilde{q}(t-1)+\sum_{i=1}^n\tilde{a}_i(t)-\tilde{d}(t)-\sum_{i=1}^{n-1} \tilde{l}_i(t).
    \end{equation}

    \item [(M2)] Non-idling: there is always a departing packet if there are packets in the buffer or there are arriving packets, i.e.,
    \begin{equation}
      \tilde{d}(t)=\begin{cases}
        1 & \textrm{if } \tilde{q}(t-1)+\sum_{i=1}^n \tilde{a}_i(t)>0\\
        0 & \textrm{otherwise.}
      \end{cases}
    \end{equation}

    \item [(M3)] Maximum buffer usage: arriving packets are lost only when the buffer is full, i.e., for $i=1,\ldots,n-1$,
    \begin{equation}
      \tilde{l}_i(t)=\begin{cases}
        1 & \textrm{if }\tilde{q}(t-1)+\sum_{i=1}^n\tilde{a}_i(t)\geq \tilde{B}+i+1\\
        0 & \textrm{otherwise.}
      \end{cases}
    \end{equation}

    \item [(M4)] FIFO: packets depart in the FIFO order.

  \end{itemize}
\end{definition}

\begin{figure}[tb]
    \centering
       \begin{tikzpicture}[>=stealth]
    	\tikzset{myptr/.style={decoration={markings,mark=at position 1 with {\arrow[scale=1.0,>=stealth]{>}}},postaction={decorate}}}

    \draw (4,4) rectangle (8,6);
    \node[scale=1.2] at(6,5.2){4-to-1 multiplexer};
 \node[scale=1.2] at(6,4.7){buffer};
    \draw[myptr](3,4.35)--(4,4.35);
    \draw[myptr](3,4.75)--(4,4.75);
    \draw[myptr](3,5.15)--(4,5.15);
    \draw[myptr](3,5.55)--(4,5.55);
    \draw[decorate, decoration={brace, mirror}](2.8,5.6) -- (2.8,4.3);
    \node[left] at(2.75,5){arrival};

    \draw[myptr](8,5.6)--(9,5.6);
    \node[scale=1,right] at(9,5.6) {departure};

    \draw[myptr](8,5.15)--(9,5.15);
    \draw[myptr](8,4.75)--(9,4.75);
    \draw[myptr](8,4.35)--(9,4.35);
    \draw[decorate, decoration={brace, mirror}](9.15,4.3) -- (9.15,5.2);
    \node[right] at(9.2,4.7){loss};

    \draw(4,6.2)--(4,6.5);
    \draw[myptr](5.5,6.35)--(4,6.35);
    \node at(6,6.35) {$\boldsymbol{\tilde{B}}$};
    \draw[myptr](6.5,6.35)--(8,6.35);
    \draw(8,6.2)--(8,6.5);

    \end{tikzpicture}
    \caption{A 4-to-1 multiplexer with $\tilde{B}$ buffer.}
    \label{fig:4to1}
\end{figure}
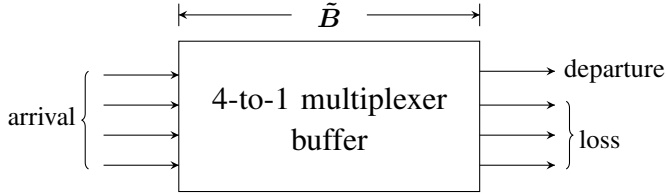

See Fig.~\ref{fig:4to1} for an illustration of a 4-to-1 multiplexer with buffer $\tilde{B}$. As also mentioned in Sec.~\ref{sec:introduction}, a multiplexer with buffer $\tilde{B}$ is much more flexible than an FDL with delay $\tilde{B}$. Specifically,
\begin{itemize}
  \item A multiplexer has multiple inputs, which brings extra room for the design of routing policy as the collision-free condition is easier to satisfy.

  \item The buffer of a multiplexer is always used in a consecutive manner, so it can be fully utilized, and as long as the number of packets buffered does not exceed the buffer size, there would never be any buffer overflow. On the other hand, it is very hard to fully use an FDL viewed as a buffer. See Fig.~\ref{fig:bufferstate} for an illustration.
\end{itemize}

\begin{figure}
\centering
    \begin{tikzpicture}
    [ L1Node/.style={rectangle,draw, minimum size=7mm}]

       \node[L1Node,fill=gray!50]at(5.6,2.5){};
       \node[L1Node,fill=gray!50]at(4.9,2.5){};
       \node[L1Node,fill=gray!50]at(4.2,2.5){};
       \node[L1Node]at(3.5,2.5){};
       \node[L1Node]at(2.8,2.5){};
       \node[L1Node]at(2.1,2.5){};
       \node[L1Node]at(1.4,2.5){};
       \draw (1.4,2.85)--(0,2.85);
       \draw (1.4,2.15)--(0,2.15);
       \node[right] at(1.9,1.8){(a) multiplexer};

       \node[L1Node]at(5.6,0.8){};
       \node[L1Node,fill=gray!50]at(4.9,0.8){};
       \node[L1Node]at(4.2,0.8){};
       \node[L1Node,fill=gray!50]at(3.5,0.8){};
       \node[L1Node,fill=gray!50]at(2.8,0.8){};
       \node[L1Node]at(2.1,0.8){};
       \node[L1Node]at(1.4,0.8){};
       \draw (1.4,1.15)--(0,1.15);
       \draw (1.4,0.45)--(0,0.45);
       \node[right] at(2.4,0.1){(b) FDL};

       \node[left] at(0,2.5){in};
       \node[right] at(5.95,2.5){out};
          \node[left] at(0,0.8){in};
       \node[right] at(5.95,0.8){out};
    \end{tikzpicture}
    \caption{An illustration of buffer states of a multiplexer and an FDL where each slot corresponds to a packet size, and a gray slot represents a packet. }
    \label{fig:bufferstate}
\end{figure}
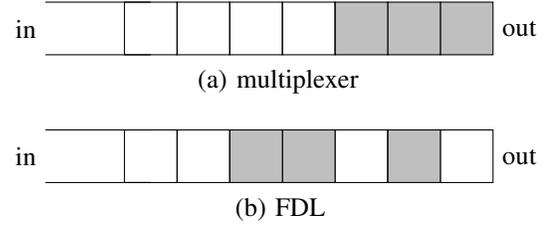
%
%

\section{Construction of Optical Priority Queues}
\label{sec:construction}
In this section, we present a very efficient construction of optical priority queues based on multiplexers, and analyze its construction cost in terms of SDLs.

To ease the presentation, we introduce some notations regarding sets of consecutive integers. Let $\Psi$ be a set of consecutive integers. Define $L(\Psi)$ and $U(\Psi)$ be the smallest integer and the largest integer in $\Psi$, respectively. That is, $\Psi=\{L(\Psi),L(\Psi)+1,\ldots,U(\Psi)\}$. For simplicity, we write $\Psi=\langle L(\Psi),U(\Psi)\rangle$.

In order to help understand our construction, we start by introducing the motivation behind our design idea.

\subsection{Motivation}
\label{sec:motivation}

Consider the construction of an optical priority queue using a feedback system as illustrated in Fig.~\ref{fig:general}, and suppose that there are $M=2\ell-1$ FDLs indexed by $1,2,\ldots,M$ for some positive integer $\ell$. One necessary condition for the design of delays of FDLs and the routing policy is that, a packet with the $i$-th highest priority cannot be switched into an FDL with delay higher than $i$. Otherwise, if there is a departure request while no packet arrives in each of the next $i$ time slots, the packet with the $i$-th highest priority cannot leave the system in time.

 One basic idea to satisfy the above condition is that, set the delays of FDLs as $1,2,4,\ldots,2^{\ell-2},2^{\ell-1},2^{\ell-2},\ldots,4,2,1$, and use a self-routing policy as follows: let packet with tag belonging to $\Psi_j$ enter FDL $j$, where for $j=1,2,\ldots,\ell$,
\begin{equation*}
  \Psi_j=\langle2^{j-1},2^j-1\rangle,
\end{equation*}
and for $j=\ell+1,\ell+2,\ldots,2\ell-1$,
\begin{equation*}
  \Psi_j=\langle3\times 2^{\ell-1}-2^{2\ell-j},3\times 2^{\ell-1}-2^{2\ell-j-1}-1\rangle.
\end{equation*}
The third column of Table~\ref{tab:parameters} gives the values of $\Psi_j$ for $\ell=5$.
(Here the delay sequence and the tag set sequence exhibit a symmetric structure which is employed for the priority loss property.) This setting is ``ideal" in the sense that the switching system can buffer up to $O(2^{\ell})$ packets. However, this setting fails to be a priority queue. The underlying issue is collision, i.e., there will be multiple packets with tags belonging to a same $\Psi_j$  that enter a same FDL at the same time according to the routing policy.

As multiplexers have multiple inputs providing the possibility to solve the collision issue, we are motivated to replace each FDL with a multiplexer with buffer equal to the delay of the FDL. However, this cannot solve the collision issue completely since the number of packets entering a multiplexer can be larger than the number of inputs of the multiplexer (which should be a limited number for construction efficiency). Besides, we need to get rid of buffer overflow at each multiplexer.

To solve the collision issue fundamentally, our key idea is to use multiple multiplexers with smaller buffers as a group to replace each FDL instead of using a single multiplexer. In this way, we can guarantee that the packets entering a group of multiplexers can only come from certain groups of multiplexers except for the arrival link, which have a limited number. So by using multiplexers with a proper number of inputs, the collisions can be avoided. Also, we can establish an upper bound on the number of packets that need to be buffered at some group of multiplexers, and then choose a proper number of multiplexers in a group such that the total buffer size exceeds the upper bound. Thanks to the property that the buffer of a multiplexer is always used in a consecutive manner as mentioned in Sec.~\ref{sec:preliminaries}, buffer overflow can thus never happen at each multiplexer as long as the buffers of the multiplexers in a same group are equally used (differing by at most one packet).

\subsection{Description of the Construction}

Now we formally introduce our construction of optical priority queue.
\subsubsection{Structure}
Let $\ell$ be a positive integer. In our construction, an optical priority queue, as illustrated in Fig.~\ref{fig:detailedconstruction}, consists of a $(24\ell-10)\times (24\ell-10)$ crossbar switch and $2\ell-1$ groups of multiplexers.  For each $j=1,2,\ldots,2\ell-1$, the $j$-th group of multiplexers consists of three parallel 4-to-1 multiplexers with buffer $B_j$, where
\begin{equation*}
  B_j=\begin{cases}
    1 & j=1\\
    2^{j-2} & j=2,3,\ldots,\ell\\
    2^{2\ell-j-2} & j=\ell+1,\ell+2,\ldots,2\ell-2\\
    1 & j=2\ell-1.
  \end{cases}
\end{equation*}
So each group of multiplexers has 12 input links in total. For $i=0,1,2$, we label the 4 input links in the $i$-th multiplexer as $i$-th, $(i+3)$-th, $(i+6)$-th and $(i+9)$-th input links of the group of multiplexers. See Fig.~\ref{fig:groupofmux} for an illustration. The reason for using three multiplexers each with four inputs in a group will be clear after our analysis (c.f. Remark~\ref{remark:why4} and Lemma~\ref{lem:multiplexersize}).

Recall the definition of $\Psi_j$ given in Sec.~\ref{sec:motivation}. We have
\begin{equation*}
  |\Psi_j|=
  \begin{cases}
    B_j & j=1 \text{ or }j=2\ell-1\\
    2B_j & j=2,3,\ldots,2\ell-2.
  \end{cases}
\end{equation*}
Let
\begin{equation*}
  B^*\triangleq 3\times 2^{\ell-1}-2=U(\Psi_{2\ell-1}).
\end{equation*}
Table~\ref{tab:parameters} gives an example on these parameters where $\ell=5$.

\begin{figure}[!tb]
    \centering
\begin{tikzpicture}[>=stealth]
    	\tikzset{myptr/.style={decoration={markings,mark=at position 1 with
   			{\arrow[scale=1,>=stealth]{>}}},postaction={decorate}}}

        \draw (3.5,3.5) rectangle (7.5,7.5);
        \node[scale=2] at (5.5,5.5){Switch};

        \draw[rounded corners] (3.8,2) rectangle (7.2,2.7);
        \node[scale=1.2] at (5.5,2.35){Controller};

        \draw (5.5,3.5)--(6.1,3.1)--(5.75,3.1)--(5.75,2.7);
        \draw (5.5,3.5)--(4.9,3.1)--(5.25,3.1)--(5.25,2.7);

        \draw[myptr](1.7,4.1)--node [pos=0.5,below,sloped]{arrival}(3.5,4.1);

        \draw[myptr](7.5,4.1)--(9.3,4.1);
        \node at(8.5,4.3){departure};

        \draw[myptr](7.5,3.7)--(9.3,3.7);
        \node [below] at(8.5,3.7){loss};

        \draw (4.2,7.7) rectangle (6.8,8.5);
        \node[scale=0.9,align=center] at (5.5,8.1){$1^{st}$ group of\\
        	4-to-1 multiplexers};
        \foreach \x in {1,4}
        {
        	\draw (7.5,7.6-0.2*\x)--(7.6+0.1*\x,7.6-0.2*\x)--(7.6+0.1*\x,7.6+0.2*\x);
        	\draw[myptr] (7.6+0.1*\x,7.6+0.2*\x)--(6.8,7.6+0.2*\x);
        }
    	\foreach \x in {1,2,3}
    	{
    		\draw (4.2,7.7+0.2*\x)--(3.3-0.1*\x,7.7+0.2*\x)--(3.3-0.1*\x,7.5-0.2*\x);
    		\draw[myptr] (3.3-0.1*\x,7.5-0.2*\x)--(3.5,7.5-0.2*\x);
    	}

    \draw (4.2,8.7) rectangle (6.8,9.5);
    \node[scale=0.9,align=center] at (5.5,9.1){$2^{nd}$ group of\\
    	4-to-1 multiplexers};
    \foreach \x in {1,4}
    {
    	\draw (7.5,6.8-0.2*\x)--(8.1+0.1*\x,6.8-0.2*\x)--(8.1+0.1*\x,8.6+0.2*\x);
    	\draw[myptr] (8.1+0.1*\x,8.6+0.2*\x)--(6.8,8.6+0.2*\x);
    }
	\foreach \x in {1,2,3}
	{
		\draw (4.2,8.7+0.2*\x)--(2.9-0.1*\x,8.7+0.2*\x)--(2.9-0.1*\x,6.8-0.2*\x);
		\draw[myptr] (2.9-0.1*\x,6.8-0.2*\x)--(3.5,6.8-0.2*\x);
	}

    \draw [fill] (3.3,5.8) circle [radius=0.025];
    \draw [fill] (3.3,5.6) circle [radius=0.025];
    \draw [fill] (3.3,5.4) circle [radius=0.025];

    \draw [fill] (7.7,5.8) circle [radius=0.025];
    \draw [fill] (7.7,5.6) circle [radius=0.025];
    \draw [fill] (7.7,5.4) circle [radius=0.025];

    \draw [fill] (5.5,9.7) circle [radius=0.025];
    \draw [fill] (5.5,9.9) circle [radius=0.025];
    \draw [fill] (5.5,10.1) circle [radius=0.025];

    \draw [fill] (8.8,7.6) circle [radius=0.025];
    \draw [fill] (9,7.6) circle [radius=0.025];
    \draw [fill] (9.2,7.6) circle [radius=0.025];

    \draw [fill] (2.3,7.6) circle [radius=0.025];
    \draw [fill] (2.1,7.6) circle [radius=0.025];
    \draw [fill] (1.9,7.6) circle [radius=0.025];

    \foreach \x in {1,2,3}{
    	\draw [fill] (7.7,6.1+0.1*\x) circle [radius=0.02];
    	\draw [fill] (7.7,6.9+0.1*\x) circle [radius=0.02];
    	\draw [fill] (7.7,7.9+0.1*\x) circle [radius=0.02];
    	\draw [fill] (7.7,8.9+0.1*\x) circle [radius=0.02];
    	\draw [fill] (7.7,10.5+0.1*\x) circle [radius=0.02];
    }

    \draw (4.2,10.3) rectangle (6.8,11.1);
    \node[scale=0.9,align=center] at (5.5,10.7){($2\ell$-1)-th group of\\
    	4-to-1 multiplexers};
    \foreach \x in {1,4}
    {
    	\draw (7.5,5.4-0.2*\x)--(9.4+0.1*\x,5.4-0.2*\x)--(9.4+0.1*\x,10.2+0.2*\x);
    	\draw[myptr] (9.4+0.1*\x,10.2+0.2*\x)--(6.8,10.2+0.2*\x);
    }
	\foreach \x in {1,2,3}
	{
		\draw (4.2,10.3+0.2*\x)--(1.7-0.1*\x,10.3+0.2*\x)--(1.7-0.1*\x,5.2-0.2*\x);
		\draw[myptr] (1.7-0.1*\x,5.2-0.2*\x)--(3.5,5.2-0.2*\x);
	}

    \node[scale=0.8,right] at (7.8,10.65){12 input links};
    \draw[->](8.5,11.3)--(8.5,11);
    \draw[->](8.5,10.1)--(8.5,10.4);

    \end{tikzpicture}
        \caption{The construction of an optical priority queue based on 4-to-1 multiplexers. Each group of 4-to-1 multiplexers consists of three 4-to-1 multiplexers with same buffer size. Here the loss links of 4-to-1 multiplexers are omitted.}
    \label{fig:detailedconstruction}
\end{figure}
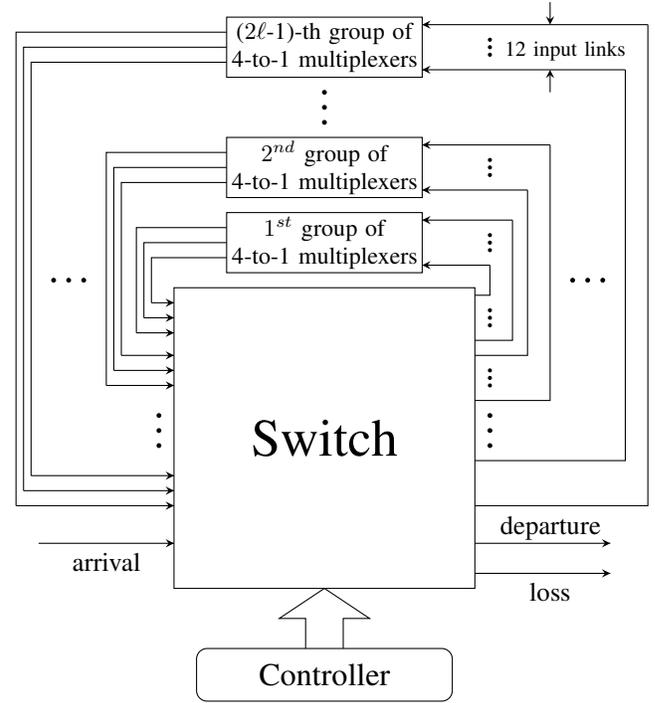

\begin{figure}[!tb]
    \centering
\begin{tikzpicture}[>=stealth]

    \draw (2,2) rectangle (5,3.6);
    \node[scale=1.2] at(3.5,3.1){4-to-1};
    \node[scale=1.2] at(3.5,2.5){multiplexer};
    \draw[->](5,2.8)--(6.5,2.8);
    \foreach \x in {1,2,3,4}
     \draw[->](0.5,1.8+0.4*\x)--(2,1.8+0.4*\x);

    \draw (2,4) rectangle (5,5.6);
    \node[scale=1.2] at(3.5,5.1){4-to-1};
    \node[scale=1.2] at(3.5,4.5){multiplexer};
    \draw[->](5,4.8)--(6.5,4.8);
    \foreach \x in {1,2,3,4}
     \draw[->](0.5,3.8+0.4*\x)--(2,3.8+0.4*\x);

    \draw (2,6) rectangle (5,7.6);
    \node[scale=1.2] at(3.5,7.1){4-to-1};
    \node[scale=1.2] at(3.5,6.5){multiplexer};
    \draw[->](5,6.8)--(6.5,6.8);
    \foreach \x in {1,2,3,4}
     \draw[->](0.5,5.8+0.4*\x)--(2,5.8+0.4*\x);

    \node[left,scale=0.9] at(0.5,7.4) {0};
    \node[left,scale=0.9] at(0.5,7.0) {3};
    \node[left,scale=0.9] at(0.5,6.6) {6};
    \node[left,scale=0.9] at(0.5,6.2) {9};

    \node[left,scale=0.9] at(0.5,5.4) {1};
    \node[left,scale=0.9] at(0.5,5.0) {4};
    \node[left,scale=0.9] at(0.5,4.6) {7};
    \node[left,scale=0.9] at(0.5,4.2) {10};

    \node[left,scale=0.9] at(0.5,3.4) {2};
    \node[left,scale=0.9] at(0.5,3.0) {5};
    \node[left,scale=0.9] at(0.5,2.6) {8};
    \node[left,scale=0.9] at(0.5,2.2) {11};

    \draw[dashed] (1.2,1.8) rectangle (5.8,7.8);
    \end{tikzpicture}
     \caption{A group of 4-to-1 multiplexers consists of three 4-to-1 multiplexers. The indices of 12 input links are labelled. Here the loss links are omitted.}
    \label{fig:groupofmux}
\end{figure}
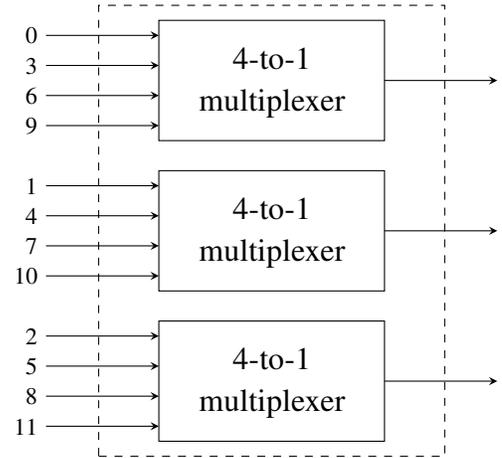

\subsubsection{Routing Policy}
The routing policy performed by the switch at the beginning of time $t$, $t=1,2,\ldots$, is as follows.
\begin{itemize}
  \item If there is a departure request, i.e., $c(t)=1$, then compare the packets out from the first and the second groups of multiplexers as well as the arriving packet, if any, and let the one having the highest priority depart this switching system from the departure link.

  \item If there is no departure request and there is an arriving packet, but the buffer is full, i.e., $c(t)=0$, $a(t)=1$ and $q(t-1)=B^*$, then compare the arriving packet and the packets out from the last group of multiplexers, if any, and let the packet having the lowest priority leave the switching system from the loss link.

  \item Every other packet $i$ entering the switch will be pushed into the $j$-th group of multiplexer such that
  \begin{equation}
  \label{eq:routing}
    \tau_i(t)\in \Psi_j.
  \end{equation}
  Specifically, suppose that there are $k$ packets entering the $j$-th group of multiplexer according to \eqref{eq:routing}, and let $u_j(t)$ be the index of the input link of the group that is lastly used before $t$, where $u_j(1)=0$. Then these $k$ packets will enter the $j$-th group via $((u_j(t)+1)\% 12)$-th, $((u_j(t)+2)\% 12)$-th, \ldots, $((u_j(t)+k)\% 12)$-th input links, respectively. Also, set $u_j(t+1)=(u_j(t)+k)\% 12$. Here $x\% y$ denotes the reminder of $x$ when divided by $y$.
%
\end{itemize}

It is straightforward to see that, the computation cost of the above routing policy at each time is linear with the number of packets entering the switch at that time, or equivalently, $O(\ell)$.  It is remarkable that this routing policy is feasible only if there is no packet collision at each group of multiplexer, i.e., the number of packets entering a group of multiplexers at each time is at most 12, the total number of input links of the group of multiplexers. As we will show in Lemma~\ref{lem:nocollision}, this requirement can always be satisfied.

We have the following main result.
\begin{theorem}
    \label{thm:emulation}
  The proposed switching system is an optical priority queue with buffer $B^*$.
\end{theorem}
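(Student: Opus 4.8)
The plan is to verify the five defining properties (P1)--(P5) of Definition~\ref{def:pq} for the constructed system with buffer $B=B^{*}$. Properties (P1)--(P3) are essentially accounting statements: once we know the routing policy never fails (no packet collision at any group and no multiplexer overflow), flow conservation holds because every packet entering the switch is either departed, lost, or pushed into exactly one multiplexer, while non-idling and maximum buffer usage follow directly from the three cases of the routing rule together with the multiplexer properties (M1)--(M2). The genuine content lies in (P4) and (P5): on a departure request the highest-priority packet must actually be present at the output of the first or second group (or be the arrival), and when the buffer is full the lowest-priority packet must be present at the output of the last group (or be the arrival). So the real work is (i) proving feasibility of the routing policy and (ii) proving this \emph{availability} of the extremal packets.

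The central device I would use is a single time-indexed invariant $\mathcal{I}(t)$, proved by induction on $t$, bundling three claims. \emph{Balance}: the three $4$-to-$1$ multiplexers of each group have occupancies differing by at most one, which is forced because the routing rule fills the twelve input links in the cyclic order $((u_j(t)+1)\%12),((u_j(t)+2)\%12),\dots$ and the labelling sends consecutive indices to distinct multiplexers, so arrivals to the three multiplexers stay balanced and their synchronized one-per-slot departures keep them balanced. \emph{Localization}: a packet whose current tag lies in $\Psi_j$ is routed to group $j$, and since a packet stays in a group for only a bounded number of slots, its tag can drift out of $\Psi_j$ by only a controlled amount before it resurfaces and is re-routed; hence the packets physically resident in group $j$ have tags in a slightly enlarged window around $\Psi_j$, and the groups are ``sorted'' by priority rank. \emph{Availability}: the unique tag-$1$ packet is either the arrival or is emerging in the current slot from group $1$ or group $2$, and dually the tag-$(B^{*}+1)$ packet is either the arrival or is emerging from group $2\ell-1$. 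The reason two groups must be inspected on a departure but only one on a loss is exactly the drift in Localization: a packet that entered group $2$ with tag $2$ or $3$ may have fallen to tag $1$ by the time it surfaces, so the top packet can appear at either of the first two outputs.

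Given $\mathcal{I}(t)$, (P4) and (P5) are immediate from Availability, so the induction step carries the whole argument, and it rests on the two feasibility lemmas. For \emph{no collision} (Lemma~\ref{lem:nocollision}) I would bound the number of packets routed into group $j$ in one slot: apart from the single arrival, each such packet must be emerging from some multiplexer with a freshly recomputed tag in $\Psi_j$, and by Localization only a bounded set of groups near index $j$ can supply one; since every group emits at most three packets per slot and the arrival adds at most one, the total never exceeds the twelve input links of the group. For \emph{no overflow} (Lemma~\ref{lem:multiplexersize}) I would bound the occupancy of group $j$ by counting the packets whose tag lies in the drift-enlarged window around $\Psi_j$, and then invoke Balance to spread that occupancy evenly so that no single multiplexer exceeds its buffer $B_j$; property (M3) then never discards a buffered packet. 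This is precisely where using three multiplexers, each with four inputs, supplies the needed slack, and where Remark~\ref{remark:why4} enters.

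The step I expect to be the main obstacle is maintaining \emph{Localization} and \emph{Availability} together through the inductive step while simultaneously respecting the collision and overflow bounds, because these quantities are coupled: the admissible tag drift that keeps an extremal packet surfacing at the right group is exactly what governs how many packets crowd into a neighbouring group, which is in turn what the no-collision and no-overflow counts depend on. Making the drift window explicit (as a function of $B_j$ and the effective per-group delay induced by the multiplexer occupancies) and checking that it closes consistently at the two boundary groups $1$ and $2\ell-1$, where the priority-departure and priority-loss comparisons are performed, is the delicate part. Once the window is pinned down, the collision count, the overflow count, and the availability of the tag-$1$ and tag-$(B^{*}+1)$ packets all follow from the same bookkeeping, and Theorem~\ref{thm:emulation} is concluded by reading (P1)--(P5) off the invariant.
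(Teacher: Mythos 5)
Your overall architecture coincides with the paper's: induction on time assuming exact emulation up to $T-1$; a balance lemma showing the three multiplexers of a group differ in occupancy by at most one (the paper's Lemma~\ref{lem:balance}); a localization lemma bounding the tags of packets resident in group $j$ (Lemma~\ref{lem:range}); a collision count showing group $j$ is fed only by groups $j-1$, $j$, $j+1$ and the arrival link, hence by at most $3\times 3+1=10\le 12$ packets per slot (Lemma~\ref{lem:nocollision}); and the availability argument for (P4)/(P5), which exploits the fact that the first, second and last groups consist of buffer-one multiplexers so the tag-$1$ and tag-$(B^*+1)$ packets necessarily surface at time $T$. Your observation about why two groups must be inspected on departure but one on loss also matches the paper.

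However, the specific route you propose for the no-overflow step would fail. You say you would ``bound the occupancy of group $j$ by counting the packets whose tag lies in the drift-enlarged window around $\Psi_j$.'' By Lemma~\ref{lem:range} that window is $\langle L(\Psi_j)-B_j+1,\,U(\Psi_j)+B_j-1\rangle$, which contains $|\Psi_j|+2B_j-2=4B_j-2$ integers for $2\le j\le 2\ell-2$. The total buffer of a group is only $3B_j$, so this count exceeds the available space whenever $B_j\ge 4$, i.e.\ for most groups; window-counting plus Balance therefore cannot rule out overflow. The paper closes this gap with an idea absent from your sketch: the \emph{difference} between the tags of any two resident packets changes by at most $1$ per slot (Lemma~\ref{lem:diff}). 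Combining this with the facts that each of the two packets entered group $j$ with a tag in $\Psi_j$ and that neither can have resided there more than $B_j-1$ slots gives $|\tau_{i_1}(T)-\tau_{i_2}(T)|\le (B_j-1)+(|\Psi_j|-1)\le 3B_j-2$ (Lemma~\ref{lem:multiplexersize}), hence at most $3B_j-1$ packets in the group, and Balance then forces each multiplexer to hold at most $B_j$ (Lemma~\ref{lem:nooverflow}). Without this pairwise-drift refinement your invariant does not close, and this is precisely the quantitative reason three four-input multiplexers per group suffice.
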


The proof of Theorem~\ref{thm:emulation} is deferred to Sec.~\ref{sec:proof}.

\begin{table}[!tb]
 \center
\caption{Some parameters of an optical construction of priority queues where $\ell=5$ and $B^*=46$.. The column of tag range is due to Lemma~\ref{lem:range}, and the last column is due to Lemma~\ref{lem:multiplexersize}.}
\label{tab:parameters}
\begin{tabular}{ccccc}
\toprule
$j$ & $B_j$ & $\Psi_j$ & tag range & num. of buffered pkt\\

\midrule
\rowcolor[gray]{0.9} 1 & 1 & $\{1\}$ & $\{1\}$ & $\leq 1$ \\

2 & 1 & $\langle 2,3\rangle$ & $\langle 2,3\rangle$ & $\leq 2$\\

\rowcolor[gray]{0.9} 3 & 2 & $\langle 4,7\rangle$ & $\langle 3,8\rangle$ & $\leq 5$ \\

4 & 4 & $\langle 8,15\rangle$ & $\langle 5,18\rangle$ & $\leq 11$\\

\rowcolor[gray]{0.9} 5 & 8 & $\langle 16,31\rangle$ & $\langle 9,38\rangle$ & $\leq 23$  \\

6 & 4 & $\langle 32,39\rangle$ & $\langle 29,42\rangle$ & $\leq 11$\\

\rowcolor[gray]{0.9} 7 & 2 & $\langle40,43\rangle$ & $\langle 39,44\rangle$ & $\leq 5$\\

8 & 1 & $\langle 44,45\rangle$ & $\langle 44,45\rangle$ & $\leq 2$\\

\rowcolor[gray]{0.9} 9 & 1 & $\{46\}$ &$\{46\}$ & $\leq 1$ \\
\bottomrule
\end{tabular}
\end{table}

\subsection{Construction Cost}

The cost of our construction of optical priority queues depends on how to construct 4-to-1 multiplexers with SDLs. As will be demonstrated in Lemma~\ref{lem:nooverflow} in Sec.~\ref{sec:proof}, there would never be any buffer overflow at each multiplexer. Based on this fact, some requirements on the used 4-to-1 multiplexers could be relaxed.
\begin{itemize}
  \item First, any 4-to-1 multiplexer with $B_j$ buffer could be replaced by a 4-to-1 multiplexer with buffer larger than or equal to $B_j$. This is because, when no buffer overflow happens at either of them, they have identical departure processes if both are started from empty systems and subject to identical arrival processes.

 \item Second, a 4-to-1 multiplexer could be replaced by a 4-to-1 \emph{delayed-loss} multiplexer with the same buffer size. A 4-to-1 delayed-loss multiplexer and a 4-to-1 multiplexer with the same buffer size have identical departure processes if they are started from empty systems and subject to identical arrival processes. The only difference is that the loss processes of the two systems do not match exactly. See~\cite{chang2004recursive} for a formal definition of delayed-loss multiplexers.
\end{itemize}

Regarding the construction of delayed-loss multiplexers, we have the following result based on the construction proposed by
Chang \emph{et al.}~\cite{chang2004recursive}.
\begin{lemma}
\label{lem:4to1}
  For any positive integer $k$, an $n$-to-1 delayed-loss multiplexer with buffer $n^k-1$ can be constructed with a $((n-1)k+n)\times ((n-1)k+n)$ crossbar switch and $(n-1)k$ FDLs.
\end{lemma}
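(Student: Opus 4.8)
The plan is to realize the multiplexer as a single-switch feedback SDL network with geometrically spaced delays and to verify emulation by induction on $k$, using the recursive self-routing scheme of Chang \emph{et al.}~\cite{chang2004recursive} as the engine. Concretely, I would route the $n$ arrival links, the one departure link and the $n-1$ loss links through $n$ ``external'' ports of an $((n-1)k+n)\times((n-1)k+n)$ crossbar, and use the remaining $(n-1)k$ input/output ports to close feedback loops through $(n-1)k$ FDLs whose delays come in $k$ groups of $n-1$ equal lines, the $i$-th group (for $i=0,1,\ldots,k-1$) having delay $n^{i}$. The parameter bookkeeping is then immediate: the FDL count is $(n-1)k$, the switch side length is $(n-1)k+n$, and the total storage is $\sum_{i=0}^{k-1}(n-1)n^{i}=n^{k}-1$, which will be shown to be exactly the buffer.

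For the emulation I would induct on $k$. In the base case $k=1$ the network is a $(2n-1)\times(2n-1)$ switch closed by $n-1$ delay-$1$ lines; at each slot it sees at most $2n-1$ packets (up to $n$ arrivals and up to $n-1$ fed back), departs the FIFO-oldest one, stores the next $n-1$ oldest in the delay-$1$ lines, and sends the rest to loss, which directly gives an $n$-to-1 delayed-loss multiplexer with buffer $n-1=n^{1}-1$ satisfying (M1)--(M4). For the inductive step I would append one fresh group of $n-1$ lines of the new largest delay $n^{k-1}$ to the buffer-$(n^{k-1}-1)$ multiplexer supplied by the hypothesis; the added stage acts as a concentrator that forwards the FIFO-earliest packets into the smaller recursively built multiplexer and defers the remaining packets by $n^{k-1}$ slots, so that the two-level self-routing again departs packets in FIFO order while multiplying the attainable backlog by $n$. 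The switch sizes and FDL counts add as claimed, and merging the per-stage crossbars into one yields the stated single $((n-1)k+n)$-port switch.

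The main obstacle is the emulation verification rather than the counting: I must show that the self-routing rule keeps every delay line collision-free (at most one packet enters each line per slot), respects the delay constraint so that no stored packet is held longer than its buffer position allows, preserves the global FIFO departure order across the $k$ stages, and realizes the \emph{delayed}-loss version of (M3) (lost packets may exit late, but the departure process is identical to that of an ideal $n$-to-1 multiplexer started empty under the same arrivals). Each of these is precisely a property established for the recursive multiplexer of~\cite{chang2004recursive}, so the crux is to check that specializing their delay sequence to the geometric choice $n^{0},n^{1},\ldots,n^{k-1}$ with multiplicity $n-1$ preserves their invariants; the FIFO-preservation and collision-freeness across the recursion are the delicate points, while the buffer and cost formulas then follow from the geometric-sum computation above.
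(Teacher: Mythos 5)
Your proposal diverges from the paper's proof in a way that introduces a real error. The paper does not re-derive the multiplexer at all: it takes the tandem (feed-forward) construction of Chang \emph{et al.}~\cite{chang2004recursive} --- $k+1$ cascaded $n\times n$ switches, with one direct link and $n-1$ FDLs between consecutive stages --- as a black box, and the entire content of the proof is the integration step: a naive merge of the $k+1$ switches gives an $(nk+n)\times(nk+n)$ switch, and one must observe that the $k$ zero-delay direct links become useless self-loops whose ports can be deleted, which is exactly what brings the size down to $((n-1)k+n)\times((n-1)k+n)$. Your write-up never makes this observation; your port count comes out right only because your redesigned network has no direct links to begin with, i.e., because you rebuilt the multiplexer from scratch rather than merging theirs.

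The rebuild is where the gap lies. You give the $i$-th group $n-1$ \emph{equal} delays $n^i$ and identify the buffer with the sum of all delays. In a feed-forward self-routing multiplexer, a packet that joins the queue in position $q$ must be delayed by exactly $q$ slots while traversing one link per stage; with your delays the realizable sojourns are only the $2^k$ subset sums of $\{1,n,\dots,n^{k-1}\}$, which cannot cover $\{0,1,\dots,n^k-1\}$ once $n\ge 3$. The delays that make the scheme work are $n^i,2n^i,\dots,(n-1)n^i$ at stage $i$ (base-$n$ decomposition of the residual delay); these sum to $\tfrac{n(n^k-1)}{2}$, so the buffer is \emph{not} the sum of the delays and your geometric-sum computation yields the right number by accident. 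Reading your network instead as a feedback system with recirculation does not rescue it: for $n=3$, $k=2$ (delays $1,1,3,3$, buffer $8$), three arrivals per slot starting from empty force, at time $3$, four packets with residual delays $1,4,5,6$ to be stored while both delay-$3$ lines are still occupied, leaving only the two delay-$1$ lines free --- a collision. Consequently, deferring collision-freeness and FIFO preservation to ``properties established for the recursive multiplexer of~\cite{chang2004recursive}'' is not available to you, because the network you built is not their network. The repair is to do what the paper does: cite their construction verbatim, delay assignment included, and prove only the switch-integration and direct-link-removal step.
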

\begin{proof}
  Chang \emph{et al.}~\cite{chang2004recursive} gave a construction of an $n$-to-1 delayed-loss multiplexer with buffer $n^k-1$, which consists of $k+1$ $n\times n$ crossbar switches, indexed by $0,1,\ldots,k$, in tandem. More specifically, the $n$ input links of the $0$-th crossbar switch act as the $n$ input links of the $n$-to-1 delayed-loss multiplexer, and the $n$ output links of the $k$-th crossbar switch act as the output link and $n-1$ loss links of the multiplexer. For $i=0,1,\ldots,k-1$, the $n$ output links of the $i$-th crossbar switch connect to $n$ input links of the $i+1$-th crossbar switch, each via an FDL with some specific delay except one via a direct link. See \cite[Fig. 17]{chang2004recursive} for an illustration. This construction uses $(n-1)k$ FDLs in total.

 Now consider the integration of all the switches in the construction into one. A straightforward integration will consist of an $(nk+n)\times (nk+n)$ crossbar switch, $(n-1)k$ FDLs which connect $(n-1)k$ outputs and $(n-1)k$ inputs of the switch, and $k$ direct links which connect $k$ outputs and $k$ inputs of the switch. Note that the $k$ direct links become useless in this integration. So the links together with the corresponding inputs and outputs can be removed from this integration, which leads to a construction of $((n-1)k+n)\times ((n-1)k+n)$ crossbar switch and $(n-1)k$ FDLs.
 \end{proof}

Define
\begin{equation*}
  B_j'=\begin{cases}
    3 & j=1\\
    4^{\lceil\frac{j-1}{2}\rceil}-1 & j=2,3,\ldots,\ell\\
    4^{\lceil \frac{2\ell-j-1}{2}\rceil}-1 & j=\ell+1,\ell+2,\ldots,2\ell-2\\
    3 & j=2\ell-1.
  \end{cases}
\end{equation*}
It is straightforward to check that $B_j'\geq B_j$ for all $j=1,2,\ldots,2\ell-1$.

In order to take advantage of Lemma~\ref{lem:4to1}, we replace each 4-to-1 multiplexer with buffer $B_j$, $j=1,2,\ldots,2\ell-1$ in our construction with a 4-to-1 delayed-loss multiplexer with buffer $B_j'$. We refer to this construction as \emph{specialized construction}. According to our analysis, the specialized construction is also an optical priority queue with buffer $B^*$. By further integrating all the switches used into one, we have the following result (the result for the trivial case that $\ell=1$ is omitted).

\begin{theorem}
\label{thm:constructioncost}
For any positive integer $\ell\geq 2$, an optical priority queue with $3\times 2^{\ell-1}-2$ can be constructed with a $(\frac{1}{2}(9\ell^2+39\ell)+8)\times (\frac{1}{2}(9\ell^2+39\ell)+8)$ crossbar switch and $\frac{9}{2}(\ell^2-\ell)+18$ FDLs.
\end{theorem}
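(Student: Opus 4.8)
The plan is to assemble the stated numbers from the pieces already in hand: first argue that the \emph{specialized construction} is still a priority queue of the right buffer, then count its FDLs, and finally collapse all the switches into one and count its ports. For correctness, note that Theorem~\ref{thm:emulation} already certifies that the construction built from the three genuine $4$-to-$1$ multiplexers of buffer $B_j$ per group is an optical priority queue with buffer $B^{*}=3\times 2^{\ell-1}-2$. The two relaxations recorded just above the statement—replacing a multiplexer by one of larger buffer, and replacing a multiplexer by a delayed-loss multiplexer of the same buffer—both preserve the departure process \emph{provided no buffer overflow ever occurs}, which is exactly the content of Lemma~\ref{lem:nooverflow}. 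Since $B_j'\ge B_j$ for every $j$, I would conclude that the specialized construction (three delayed-loss $4$-to-$1$ multiplexers of buffer $B_j'$ per group) still emulates a priority queue with buffer $B^{*}$.

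Next I would count the fiber delay lines. Writing $B_j'=4^{k_j}-1$, Lemma~\ref{lem:4to1} with $n=4$ says each delayed-loss $4$-to-$1$ multiplexer of buffer $B_j'$ is built from a $(3k_j+4)\times(3k_j+4)$ switch, $3k_j$ FDLs, and $k_j$ internal direct links; with three multiplexers per group the total FDL count is $9\sum_{j=1}^{2\ell-1}k_j$. Reading the exponents off the definition of $B_j'$ gives $k_1=k_{2\ell-1}=1$, $k_j=\lceil(j-1)/2\rceil$ for $2\le j\le\ell$, and $k_j=\lceil(2\ell-j-1)/2\rceil$ for $\ell+1\le j\le 2\ell-2$. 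Re-indexing the two middle blocks turns the sum into $2+\sum_{m=1}^{\ell-1}\lceil m/2\rceil+\sum_{m=1}^{\ell-2}\lceil m/2\rceil$, and the elementary identity $\sum_{m=1}^{n}\lceil m/2\rceil=\lfloor(n+1)^{2}/4\rfloor$, combined with a one-line parity check showing that $\lfloor\ell^{2}/4\rfloor+\lfloor(\ell-1)^{2}/4\rfloor=(\ell^{2}-\ell)/2$ for both parities of $\ell$, yields $\sum_j k_j=2+\tfrac12(\ell^{2}-\ell)$. Hence the FDL count equals $9\bigl(2+\tfrac12(\ell^{2}-\ell)\bigr)=\tfrac92(\ell^{2}-\ell)+18$, matching the claim.

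Finally I would integrate every switch into a single one and count its ports. The central switch of Fig.~\ref{fig:detailedconstruction} is $(24\ell-10)\times(24\ell-10)$: its $24\ell-10$ outputs feed the departure link, the loss link, and the $12$ inputs of each of the $2\ell-1$ groups. Each group contributes three $(3k_j+4)\times(3k_j+4)$ multiplexer switches whose internal direct links have already been absorbed by Lemma~\ref{lem:4to1}. Laying all these switches side by side produces a square switch of size $(24\ell-10)+\sum_{j}3(3k_j+4)=\tfrac92(\ell^{2}-\ell)+48\ell-4$. The $12(2\ell-1)=24\ell-12$ direct links that carry packets from the central switch into the multiplexer inputs then become internal and may be deleted, each deletion shrinking the square dimension by exactly one, precisely as in the last step of the proof of Lemma~\ref{lem:4to1}. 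This leaves $\tfrac92(\ell^{2}-\ell)+24\ell+8=\tfrac12(9\ell^{2}+39\ell)+8$, the asserted switch size (the trivial case $\ell=1$ being excluded).

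I expect the last step to be the real obstacle: the port accounting of the integrated switch. The input and output sides of the central switch are badly unbalanced—about $6\ell-2$ versus $24\ell-10$—so one must track them separately, decide exactly which direct links are genuinely absorbable (here the central-to-multiplexer feeds, while the multiplexer departure feeds are retained as ports), confirm that each admissible deletion lowers the square dimension by one, and verify that after all deletions the output side is still the binding constraint, so that it governs the final square size. The FDL computation, by contrast, is routine once the floor-sum identity and the parity cancellation are in place.
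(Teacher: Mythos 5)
Your proposal is correct and follows essentially the same route as the paper: adopt the specialized construction with delayed-loss multiplexers of buffer $B_j'$, count $9\sum_j k_j=\frac{9}{2}(\ell^2-\ell)+18$ FDLs via Lemma~\ref{lem:4to1}, and integrate all switches into one while deleting the $24\ell-12$ central-to-multiplexer direct links (your ``central size plus multiplexer sizes minus deleted links'' bookkeeping is arithmetically identical to the paper's ``sum of multiplexer switch sizes plus $2$''). Like the paper, you retain the $6\ell-3$ multiplexer-departure feeds as ports, so both arguments land on the same (valid, if slightly loose) switch size $\frac{1}{2}(9\ell^{2}+39\ell)+8$.
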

\begin{proof}
 We consider the specialized construction where we adopt the method in Lemma~\ref{lem:4to1} to construct the 4-to-1 delayed-loss multiplexers. According to Lemma~\ref{lem:4to1}, the number of FDLs used by this construction is
  \begin{align*}
    &3\left (3+\sum_{j=2}^{\ell}3\left \lceil \frac{j-1}{2}\right  \rceil +\sum_{j=\ell+1}^{2\ell-2}3\left \lceil \frac{2\ell-j-1}{2}\right \rceil+3\right) \\
    =& \frac{9}{2}(\ell^2-\ell)+18,
  \end{align*}
which holds for both $\ell$ is even and $\ell$ is odd.

Note that when integrating all the switches into one, the size of the integrated switch is the sum of the sizes of all the switches used for constructing 4-to-1 delayed-loss multiplexers plus 2 other than plus $24\ell-10$. According to Lemma~\ref{lem:4to1}, this is equal to
\begin{align*}
  &3\Bigg(7+\sum_{j=2}^{\ell}\left(3\left \lceil \frac{j-1}{2}\right  \rceil+4\right) \\
  &\qquad  +\sum_{j=\ell+1}^{2\ell-2}\left(3\left \lceil \frac{2\ell-j-1}{2}\right \rceil+4\right)+7\Bigg)+2\\
    =& \frac{1}{2}(9\ell^2+39\ell)+8.
\end{align*}
The proof is accomplished.
\end{proof}

Considering the construction framework depicted in Fig.~\ref{fig:general}, we have the following result.
\begin{theorem}
  There exists a construction of optical priority queue with buffer $2^{\Theta(\sqrt{M})}$ using a single $(M+2)\times (M+2)$ crossbar switch and $M$ FDLs.
\end{theorem}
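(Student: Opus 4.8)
The plan is to obtain this statement from Theorem~\ref{thm:constructioncost} by a change of variables together with a padding argument, so that essentially all the work has already been done. For each integer $\ell\ge 2$, Theorem~\ref{thm:constructioncost} supplies a priority queue of buffer $B^*=3\times 2^{\ell-1}-2$ realized by a crossbar switch of size $S(\ell):=\tfrac12(9\ell^2+39\ell)+8$ together with $F(\ell):=\tfrac92(\ell^2-\ell)+18$ FDLs. Both $S(\ell)$ and $F(\ell)$ grow like $\tfrac92\ell^2$, whereas their gap $S(\ell)-F(\ell)=24\ell-10$ is only linear in $\ell$. The template of Fig.~\ref{fig:general} demands a switch whose size is exactly the number of FDLs plus two; since here $S(\ell)>F(\ell)+2$, the first step is to reconcile these counts.

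To do so I set $M:=S(\ell)-2$ and pad the construction with dummy FDLs. The $S(\ell)\times S(\ell)$ crossbar physically uses a single input for the arrival link, two outputs for the departure and loss links, and $F(\ell)$ input--output pairs for the genuine FDLs; hence at least $24\ell-12$ outputs and $24\ell-11$ inputs remain free. I connect $M-F(\ell)=24\ell-12$ of the free outputs to free inputs through additional FDLs (of arbitrary delay, say $0$) to which the routing policy never routes a packet. Because these dummy FDLs are never used, each of the departure, loss, and occupancy processes is identical to that of the Theorem~\ref{thm:constructioncost} construction, so properties (P1)--(P5) of Definition~\ref{def:pq} continue to hold. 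The padded system now matches Fig.~\ref{fig:general} verbatim: a single $(M+2)\times(M+2)$ switch with $M$ FDLs (the lone remaining free input being exactly the idle input of that template) and buffer $B^*$.

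It remains to express the buffer in terms of $M$. From $M=\tfrac92\ell^2+\tfrac{39}2\ell+6=\Theta(\ell^2)$ I invert to get $\ell=\Theta(\sqrt M)$ (in fact $\ell\sim\tfrac{\sqrt2}{3}\sqrt M$), whence $\log_2 B^*=\ell-1+\log_2 3+o(1)=\Theta(\sqrt M)$. This two-sided estimate is precisely the assertion $B^*=2^{\Theta(\sqrt M)}$, i.e. $2^{c_1\sqrt M}\le B^*\le 2^{c_2\sqrt M}$ for suitable constants $c_1,c_2>0$. For an arbitrary $M$ not of the form $S(\ell)-2$, I pick the largest $\ell\ge 2$ with $S(\ell)-2\le M$, realize the above construction inside a larger $(M+2)\times(M+2)$ crossbar, and fill the extra ports with a total of $M-F(\ell)$ dummy FDLs; maximality yields $S(\ell)-2\le M<S(\ell+1)-2$, so $M=\Theta(\ell^2)$ and the same bound follows.

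I do not expect any serious obstacle: the entire content of the result is carried by Theorem~\ref{thm:constructioncost}, and what remains is a routine inversion $\ell\leftrightarrow M$ plus port bookkeeping. The only points needing care are verifying that dummy FDLs and crossbar enlargement leave (P1)--(P5) intact---which is immediate once one observes that no packet is ever routed to them---and making the exponent's $\Theta$ genuinely two-sided, which follows directly from the closed form of $M$.
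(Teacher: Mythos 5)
Your proposal is correct and follows essentially the same route as the paper: the paper's own proof simply sets $M=\frac{1}{2}(9\ell^2+39\ell)+6$, invokes Theorem~\ref{thm:constructioncost}, and concludes $\ell=\Theta(\sqrt{M})$ and $B^*=2^{\Theta(\sqrt{M})}$. The port bookkeeping and dummy-FDL padding you carry out (to reconcile the $\frac{9}{2}(\ell^2-\ell)+18$ FDLs actually used with the $M$ FDLs of the Fig.~\ref{fig:general} template, and to handle $M$ not of the special form) are details the paper glosses over with ``holds directly,'' and your treatment of them is sound.
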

\begin{proof}
  Let $M=\frac{1}{2}(9\ell^2+39\ell)+6$. Then $\ell=\Theta (\sqrt{M})$ and $B^*=3\times 2^{\ell-1}-2=2^{\Theta(\sqrt{M})}$. According to Theorem~\ref{thm:constructioncost}, this result holds directly.
\end{proof}

\begin{remark}
  It remains open that how to construct a $4$-to-1 multiplexer with an arbitrary size optimally in the sense of the number of FDLs used. Any more efficient method for constructing $4$-to-1 multiplexers with buffer size $B_j$ than the one for constructing a 4-to-1 delayed-loss multiplexers with buffer size $B_j'$ would lead to a better result than Theorem~\ref{thm:constructioncost}.
\end{remark}

\begin{remark}
  The construction cost given in Theorem~\ref{thm:constructioncost} can be reduced by, e.g., replacing the first/last group of multiplexers by a single FDL, replacing each multiplexer in the second/last second group by a single FDL, etc. But these changes can only reduce the construction cost by a small fixed number, which does not change the result in the order sense.
\end{remark}

%
%
%
%





\section{Proof of Theorem~\ref{thm:emulation}}
\label{sec:proof}

According to Definition~\ref{def:pq}, we need to show that the proposed switching system satisfies all the properties (P1)-(P5) for any time $t>0$. We will prove this by induction on time $t$. It is straightforward to check that these properties hold in the base case $t=1$. For induction, we assume that the proposed switching system satisfies all the properties (P1)-(P5) for every time $t<T$. We will show that it also satisfies these properties for $t=T$.
%
%
%
%

%
%

The proof proceeds as follows. First, we will give some basic results about the changing of the tag of a packet. Then, we will prove that the proposed routing policy is collision-free, which guarantees the feasibility of the routing policy. Later, we will show that there is no buffer overflow at each multiplexer. Finally, we will prove Theorem~\ref{thm:emulation} based on the preparations.

\subsection{Tag Changing}

We first show that the tag of any packet in the switching system can change by at most one in a time slot.
\begin{lemma}
\label{lem:tagchanging}
  For any packet $i$ in this switching system at both time $t$ and time $t-1$ where $t\leq T$,
  \begin{equation}
  \label{eq:tagchanging}
    |\tau_{i}(t)-\tau_{i}(t-1)|\leq 1,
  \end{equation}
\end{lemma}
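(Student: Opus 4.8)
The plan is to reinterpret the tag $\tau_i(t)$ as the \emph{rank} of packet $i$'s (fixed, unique) priority within the set of packets present at the beginning of slot $t$, and then to track how that set changes across a single slot. Write $P(t)$ for the collection of packets present at the beginning of time $t$, i.e. those buffered at the end of slot $t-1$ together with the arrival at slot $t$, if any; then $|P(t)| = q(t-1)+a(t)$ and $\tau_i(t)$ equals one plus the number of packets in $P(t)$ whose priority is higher than that of $i$. Since every priority is fixed, all tag changes are caused solely by changes to the underlying set, so the whole argument reduces to controlling how $P(t)$ differs from $P(t-1)$.

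First I would establish the structural fact that $P(t)$ is obtained from $P(t-1)$ by removing at most one packet and adding at most one packet, neither of which is $i$. The removed packets are precisely those that depart or are lost during slot $t-1$; by the routing policy the switch sends a packet to the departure link only when $c(t-1)=1$ and to the loss link only when $c(t-1)=0$, so these events are mutually exclusive and at most one packet leaves the buffer in slot $t-1$ (equivalently, this follows from the non-idling and maximum-buffer-usage properties, available for $t-1<T$ by the induction hypothesis). At most one external packet arrives in slot $t$, since $a(t)\in\{0,1\}$. Because $i$ is present at both $t-1$ and $t$, it is neither removed nor added, so $P(t)=(P(t-1)\setminus D)\cup A$ with $D\subseteq P(t-1)\setminus\{i\}$, $A\subseteq P(t)\setminus\{i\}$, and $|D|,|A|\le 1$. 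I would then count the rank change directly: writing $\tau_i(t)-\tau_i(t-1)$ as the difference between the number of higher-priority packets in $P(t)$ and in $P(t-1)$, the added set $A$ can raise this count by at most $|A|\le 1$ and the removed set $D$ can lower it by at most $|D|\le 1$. Hence the removal contributes a change in $\{-1,0\}$, the addition contributes a change in $\{0,+1\}$, and the net change lies in $\{-1,0,+1\}$, giving $|\tau_i(t)-\tau_i(t-1)|\le 1$. The base case $t=1$ is vacuous, as the system starts empty and no packet can be present at both $t=1$ and $t=0$.

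The step I expect to be the main obstacle is precisely the claim that at most one packet leaves the buffer in a slot. If a departure and a loss could occur in the same slot, two packets could be removed, and the rank of an intermediate packet could in principle drift by two. Pinning down the mutual exclusivity of departure and loss through the $c(t-1)=1$ versus $c(t-1)=0$ dichotomy is therefore the crux of the argument; once the removals and additions are each capped at one, the remaining bound is elementary rank bookkeeping.
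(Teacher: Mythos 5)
Your proof is correct and is essentially the paper's argument in a more uniform packaging: the paper proves the bound by a short case analysis on whether a packet arrives at $t$, on its priority relative to $i$, and on whether a departure occurred at $t-1$, all justified by the induction hypothesis that (P1)--(P5) hold for $t-1<T$; your set-difference formulation ($P(t)$ differs from $P(t-1)$ by at most one removal and one addition, with departure and loss mutually exclusive by (P2) and (P3)) relies on exactly the same facts and yields the same rank bookkeeping. No gap.
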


\begin{proof}
Since the switching system emulates a priority queue up to time $T-1$, we can show the result based on the properties of a priority queue. We consider two cases:

Case 1: there is no arriving packet at time $t$. As (P1)-(P5) hold for time $t-1$, it is straightforward to see that $\tau_i(t)=\tau_i(t-1)-1$ if there exists a departure packet at time $t-1$, or $\tau_i(t)=\tau_i(t-1)$ if otherwise. Hence, \eqref{eq:tagchanging} holds.

Case 2: there is an arriving packet at time $t$. If the packet has a lower priority than $i$, then the argument for case 1 also holds. If the packet has a higher priority than $i$, then $\tau_i(t)=\tau_i(t-1)$ if there exists a departure packet at time $t-1$, or $\tau_i(t)=\tau_i(t-1)+1$ if otherwise. For all these subcases, \eqref{eq:tagchanging} holds.
\end{proof}

Lemma~\ref{lem:tagchanging} directly implies the following result, which is a generalization of Lemma~\ref{lem:tagchanging}.
\begin{corollary}
\label{cor:tagchanging}
  For any packet $i$ in the switching system at both time $t$ and time $t'$ where $t'<t\leq T$,
  \begin{equation*}
    |\tau_i(t)-\tau_i(t')|\leq t-t'.
  \end{equation*}
\end{corollary}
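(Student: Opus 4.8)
The plan is to obtain the corollary as a straightforward telescoping consequence of Lemma~\ref{lem:tagchanging}, combining the single-slot bound with the triangle inequality over the consecutive time slots lying between $t'$ and $t$.

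First I would observe that the packet $i$ must in fact be present in the switching system at \emph{every} time slot $s$ with $t'\leq s\leq t$, not merely at the two endpoints $t'$ and $t$ as the hypothesis directly asserts. This is because a packet can only leave the system through the departure link or the loss link, after which it is gone permanently and, being uniquely labelled, cannot reappear; hence presence at both $t'$ and the strictly later time $t$ forces continuous presence throughout the whole interval $[t',t]$. Although intuitively obvious, this is the only place where one must invoke the permanence of departures and losses, and I expect it to be the subtle point worth stating explicitly, since Lemma~\ref{lem:tagchanging} is applicable to a single step only when $i$ is buffered at \emph{both} endpoints of that step.

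With continuous presence established, I would apply Lemma~\ref{lem:tagchanging} to each consecutive pair of slots $(s-1,s)$ for $s=t'+1,t'+2,\ldots,t$, obtaining $|\tau_i(s)-\tau_i(s-1)|\leq 1$ for every such $s$. Finally I would telescope and use the triangle inequality:
\begin{equation*}
  |\tau_i(t)-\tau_i(t')|=\left|\sum_{s=t'+1}^{t}\bigl(\tau_i(s)-\tau_i(s-1)\bigr)\right|\leq\sum_{s=t'+1}^{t}|\tau_i(s)-\tau_i(s-1)|\leq t-t',
\end{equation*}
which is precisely the claimed bound. No further estimates are required, so once the continuous-presence observation is in place the remainder of the argument is essentially mechanical.
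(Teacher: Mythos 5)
Your proof is correct and follows essentially the same route as the paper, which simply states that Corollary~\ref{cor:tagchanging} follows directly from Lemma~\ref{lem:tagchanging} by iterating the one-slot bound; your telescoping argument makes that implication explicit. The continuous-presence observation you add is a sensible clarification but does not change the substance of the argument.
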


\subsection{Collision-free}

We first show the range of the tag of a packet buffered at some group of multiplexers.
\begin{lemma}
  \label{lem:range}
  For any packet $i$ buffered at the $j$-th group of multiplexers at time $t< T$,
  \begin{equation*}
    L(\Psi_{j})-B_j+1\leq \tau_i(t)\leq U(\Psi_{j})+B_j-1.
  \end{equation*}
\end{lemma}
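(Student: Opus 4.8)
The plan is to track packet $i$ back to the moment it was most recently routed into the $j$-th group of multiplexers, say at time $t_0\le t$, and then to control how far its tag can have drifted over the interval $[t_0,t]$. Two facts drive the argument. First, by the routing policy~\eqref{eq:routing}, at the instant the packet is pushed into the $j$-th group its tag lies in $\Psi_j$, i.e.\ $L(\Psi_j)\le \tau_i(t_0)\le U(\Psi_j)$. Second, Corollary~\ref{cor:tagchanging} bounds the tag drift: since $i$ stays in the system at every time in $[t_0,t]$ and $t_0<t\le T$, we have $|\tau_i(t)-\tau_i(t_0)|\le t-t_0$. Thus everything reduces to bounding the sojourn time $t-t_0$ of the packet inside the group.

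The heart of the proof is therefore the sojourn bound: a packet still buffered in the $j$-th group at time $t$ must have entered no earlier than $t-(B_j-1)$, i.e.\ $t-t_0\le B_j-1$. I would establish this directly from the defining multiplexer properties (M1)--(M4). Packet $i$ sits in one of the three $4$-to-$1$ multiplexers of the group, each with buffer $B_j$; from flow conservation (M1) together with maximum buffer usage (M3), an easy induction from the empty initial state shows that this multiplexer's occupancy $\tilde{q}(\cdot)$ never exceeds $B_j$. Hence right after the processing at $t_0$ the packet occupies some FIFO position $p$ with $1\le p\le B_j$. While it remains buffered the multiplexer is nonempty, so non-idling (M2) forces exactly one departure per slot, and FIFO (M4) sends every later arrival to the rear; thus $i$'s position drops by exactly one each slot and it departs at time $t_0+p$. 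Consequently $i$ is buffered only at times $t_0,\dots,t_0+p-1$, which yields $t-t_0\le p-1\le B_j-1$.

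With the sojourn bound in hand the conclusion is immediate: combining $|\tau_i(t)-\tau_i(t_0)|\le t-t_0\le B_j-1$ with $L(\Psi_j)\le\tau_i(t_0)\le U(\Psi_j)$ gives
\[
  L(\Psi_j)-B_j+1 \le \tau_i(t) \le U(\Psi_j)+B_j-1 .
\]
The degenerate case $t_0=t$, where the packet has just entered, is handled separately and trivially: then $\tau_i(t)\in\Psi_j$, and the claimed inequalities hold because $B_j\ge 1$.

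The step I expect to be the main obstacle, or at least the one demanding the most care, is the sojourn bound, and specifically ensuring that it rests only on the defining properties of a multiplexer and not on the no-overflow guarantee of Lemma~\ref{lem:nooverflow}, which is proved later and whose use would make the argument circular. The observation that avoids this is that $\tilde{q}(\cdot)\le B_j$ is a property of the multiplexer device itself (it simply sheds packets through its loss links whenever it would otherwise overflow), independent of whether our routing ever actually triggers such losses. I would also be careful in the FIFO bookkeeping to confirm that later arrivals cannot overtake $i$, and that the occupancy bound indeed caps $i$'s initial position at $B_j$, so that the worst-case residence time is exactly $B_j-1$ slots; a quick cross-check against the $\ell=5$ entries in Table~\ref{tab:parameters} confirms the resulting tag ranges.
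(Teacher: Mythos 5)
Your proposal is correct and follows essentially the same route as the paper: trace the packet back to its last entry time into the group, use the routing rule to place its tag in $\Psi_j$ at that moment, bound the sojourn time by $B_j-1$ via the multiplexer properties, and conclude with Corollary~\ref{cor:tagchanging}. Your more detailed justification of the sojourn bound (via the intrinsic occupancy cap from (M1)/(M3) plus FIFO and non-idling), and your explicit note that this avoids any circular reliance on Lemma~\ref{lem:nooverflow}, simply spells out what the paper compresses into ``according to properties (M2) and (M4).''
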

\begin{proof}
  Consider a packet $i$ buffered at some multiplexer in the $j$-th group of multiplexers at time $t<T$. Let $t'\leq t$ be time that $i$ entered the multiplexer for the last time. According to properties (M2) and (M4) of multiplexers, $i$ would depart from the multiplexer in at most $B_j$ time steps since $t'$. Hence, $t-t'\leq B_j-1$. By Corollary~\ref{cor:tagchanging}, we have
  \begin{equation*}
    |\tau_i(t)-\tau_i(t')|\leq t-t'\leq B_j-1.
  \end{equation*}
  This completes the proof as $\tau_i(t')\in \Psi_j$ according to the routing policy.
\end{proof}

The following result shows that the proposed routing policy is collision-free, which guarantees the feasibility of the proposed routing policy.
\begin{lemma}
\label{lem:nocollision}
For any $j$, the number of packets entering the $j$-th group of multiplexers at time $T$ under the routing policy is at most 10.
\end{lemma}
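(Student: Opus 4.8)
The plan is to bound the packets routed into group $j$ at time $T$ by controlling, for each \emph{source} group $j'$, the tags that a packet leaving $j'$ can carry. Since the switch is memoryless, every packet that enters group $j$ at time $T$ is either the unique arriving packet or a head-of-line packet just emitted by one of the three multiplexers of some group, and by the routing rule \eqref{eq:routing} such a packet $i$ satisfies $\tau_i(T)\in\Psi_j$. It therefore suffices to count the candidate packets entering the switch whose tag lies in $\Psi_j$.

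First I would determine the tag of a packet emitted by group $j'$ at time $T$. Such a packet $i$ is buffered in a multiplexer of group $j'$ at time $T-1$, so Lemma~\ref{lem:range} applied at $t=T-1<T$ gives $\tau_i(T-1)\in[L(\Psi_{j'})-B_{j'}+1,\,U(\Psi_{j'})+B_{j'}-1]$, and Lemma~\ref{lem:tagchanging} then pushes this to
\[
  \tau_i(T)\in R_{j'}:=[\,L(\Psi_{j'})-B_{j'},\ U(\Psi_{j'})+B_{j'}\,].
\]
The heart of the argument is the containment
\[
  R_{j'}\subseteq \Psi_{j'-1}\cup\Psi_{j'}\cup\Psi_{j'+1},\qquad 1\le j'\le 2\ell-1,
\]
with the convention $\Psi_0=\Psi_{2\ell}=\emptyset$ and the fact that all tags are at least $1$. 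I would verify it directly on the increasing branch $2\le j'\le\ell$, where $L(\Psi_{j'})=2^{j'-1}$, $U(\Psi_{j'})=2^{j'}-1$, $B_{j'}=2^{j'-2}$ give $R_{j'}=[2^{j'-2},\,5\cdot 2^{j'-2}-1]$, while $\Psi_{j'-1}\cup\Psi_{j'}\cup\Psi_{j'+1}=[2^{j'-2},\,2^{j'+1}-1]$; since $5\cdot 2^{j'-2}-1\le 2^{j'+1}-1$ the containment holds (the peak $j'=\ell$, whose neighbour $\Psi_{\ell+1}$ is taken from the decreasing formula, works identically and in fact with equality). The cases $j'=1$ and $j'=2\ell-1$ are immediate, e.g.\ $R_1=[0,2]$ so its positive part $[1,2]\subseteq\Psi_1\cup\Psi_2$. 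For the decreasing branch I would use the reflection $x\mapsto B^*+1-x$, under which $\Psi_{j'}\mapsto\Psi_{2\ell-j'}$ and $B_{j'}=B_{2\ell-j'}$, so that $R_{j'}$ maps exactly onto $R_{2\ell-j'}$; the increasing-branch containment then transfers verbatim.

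The containment says that a packet leaving group $j'$ can only be routed into one of groups $j'-1,\,j',\,j'+1$. Reading this in reverse, the only multiplexers whose emitted packet can land in group $j$ sit in groups $j-1$, $j$, and $j+1$. Each group contains exactly three multiplexers, each emitting at most one packet per slot, so these account for at most $9$ packets with tag in $\Psi_j$; together with the at most one arriving packet this yields at most $10$, proving the lemma. For the two boundary groups $j\in\{1,2\ell-1\}$ only two source groups are admissible, giving at most $7$.

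I expect the containment step to be the main obstacle, specifically making the case analysis airtight across the peak, where the defining formula for $\Psi_{j'}$ changes branch and the crude nearest-neighbour intuition is most fragile at the join $j'=\ell,\ell+1$. I would neutralise this by computing $j'=\ell$ explicitly and otherwise invoking the reflection symmetry rather than re-deriving the decreasing branch, and by keeping track of the harmless fact that the formal lower endpoint of $R_{j'}$ may drop to $0$ while admissible tags are positive.
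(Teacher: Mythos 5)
Your proposal is correct and follows essentially the same route as the paper: both apply Lemma~\ref{lem:range} at time $T-1$ together with Lemma~\ref{lem:tagchanging} to show that a packet leaving group $j'$ can only be re-routed to groups $j'-1$, $j'$, or $j'+1$, and then count $3\times 3+1=10$. The only cosmetic difference is that you verify the key containment by explicit interval computation on the increasing branch plus the reflection symmetry, whereas the paper checks it by comparing $B_j$ directly with $|\Psi_{j-1}|$ and $|\Psi_{j+1}|$.
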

\begin{proof}
Consider an arbitrary packet $i$ that is buffered at some multiplexer in the $j$-th group at time $T-1$, but leaves the multiplexer and enters the switch at $T$.

We have the following claim: for any $2\leq j\leq 2\ell-1$,
\begin{equation*}
  \tau_i(T)\geq L(\Psi_{j-1}),
\end{equation*}
and for any $1\leq j\leq 2\ell-2$,
\begin{equation*}
  \tau_i(T)\leq U(\Psi_{j+1}).
\end{equation*}
According to the routing policy, this implies that, $i$ can only enter the $(j-1)$-th group of multiplexers, $j$-th group of multiplexers, or $(j+1)$-th group of multiplexers at $T$, if exists. In other words, the packets entering the $j$-th group of multiplexers at time $T$ can only come from the packets leaving the $(j-1)$-th group of multiplexers, the $j$-th group of multiplexers, the $(j+1)$-th group of multiplexers at $T-1$, or the arrival link. Since only one packet can depart from a multiplexer at a time, the number of packets entering the $j$-th group of multiplexers at time $T$ under the routing policy is at most 10.

In the following, we will prove the claim. By Lemma~\ref{lem:range}, we have
\begin{equation*}
  L(\Psi_j)-B_j+1\leq \tau_i(T-1)\leq U(\Psi_j)+B_j-1.
\end{equation*}
Hence, for $2\leq j\leq 2\ell-1$,
\begin{IEEEeqnarray*}{rCl}
  \tau_i(T)&\geq & L(\Psi_j)-B_j\\
  &\geq & L(\Psi_j)-|\Psi_{j-1}|\\
  &=& L(\Psi_{j-1}),
\end{IEEEeqnarray*}
where the first inequality holds according to Lemma~\ref{lem:tagchanging}, and the second inequality holds since $B_j=|\Psi_{j-1}|$ if $2\leq j\leq \ell$, $B_j=|\Psi_{j-1}|/4$ if $\ell+1\leq j\leq 2\ell-2$, and $B_j=|\Psi_{j-1}|/2$ if $j=2\ell-1$. Similarly, for $1\leq j\leq 2\ell-2$,
\begin{IEEEeqnarray*}{rCl}
  \tau_i(T)&\leq & U(\Psi_j)+B_j\\
  &\leq & U(\Psi_j)+|\Psi_{j+1}|\\
  &=& U(\Psi_{j+1}),
\end{IEEEeqnarray*}
where the first inequality holds according to Lemma~\ref{lem:tagchanging}, and the second inequality holds since $B_j=|\Psi_{j+1}|/2$ if $j=1$, $B_j=|\Psi_{j+1}|/4$ if $2\leq j\leq \ell-1$ and $B_j=|\Psi_{j+1}|$ if $j\geq \ell$.
The proof is accomplished.
\end{proof}

\begin{remark}
\label{remark:why4}
  It is worth mentioning that, the proof of Lemma~\ref{lem:nocollision} is independent with the number of inputs of each multiplexer. In order to be collision-free, the total number of inputs of multiplexers is required to be larger than or equal to 10. Meanwhile, in order to guarantee the buffers of multiplexers in a group are equally used~(c.f. Lemma~\ref{lem:balance}), we should let these multiplexers have the same number of inputs. Hence, the number of inputs of multiplexers should be at least 4. In general, the construction cost of an $n$-to-1 multiplexer with a fixed buffer size grows larger with $n$~(c.f. Lemma~\ref{lem:4to1}).  So our design uses 4-to-1 multiplexers for construction efficiency.
\end{remark}

\subsection{No Buffer Overflow}

In the following, we will show that there would not be any buffer overflow at each multiplexer. We start by showing that the difference between the tags of any pair of packets in the switching system can change by at most 1 in a time slot.
\begin{lemma}
\label{lem:diff}
  For any $t\leq T$ and any packets $i_1$ and $i_2$ in this switching system at both time $t$ and time $t-1$,
  \begin{equation}
  \label{eq:diff}
   |(\tau_{i_1}(t)-\tau_{i_2}(t))-(\tau_{i_1}(t-1)-\tau_{i_2}(t-1))|\leq 1.
  \end{equation}
\end{lemma}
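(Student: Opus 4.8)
The plan is to reduce the statement about the \emph{difference} of two tags to a statement about how many packets lie, in priority, strictly between $i_1$ and $i_2$. First I would note that a packet's priority never changes over time; only its rank (tag) does. Hence the relative priority order of $i_1$ and $i_2$ is fixed, and I may assume without loss of generality that $i_1$ has higher priority than $i_2$. Writing $N(s)$ for the number of packets present at the beginning of time $s$ whose (fixed) priority lies strictly between those of $i_1$ and $i_2$, the definition of tag gives $\tau_{i_2}(s)-\tau_{i_1}(s)=N(s)+1$ at every time $s$ at which both $i_1$ and $i_2$ are present. Therefore $(\tau_{i_1}(t)-\tau_{i_2}(t))-(\tau_{i_1}(t-1)-\tau_{i_2}(t-1))=N(t-1)-N(t)$, and it suffices to prove $|N(t)-N(t-1)|\leq 1$.

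The core of the argument is to control how the set of buffered packets changes from $t-1$ to $t$. Since the induction hypothesis guarantees that (P1)--(P5) hold up to time $T-1$, between the beginning of time $t-1$ and the beginning of time $t$ the packet set loses at most one packet (the departing packet when $c(t-1)=1$, or the lost packet when $c(t-1)=0$, these being mutually exclusive by (P2) and (P3)) and gains at most one packet (the arrival at time $t$). The key observation is that the departing packet has the highest priority among all packets present at $t-1$ by (P4), while the lost packet has the lowest by (P5); since both $i_1$ and $i_2$ are present at $t-1$, the removed packet is an extremal one and hence \emph{cannot} lie strictly between $i_1$ and $i_2$ in priority. Consequently the removal leaves $N$ unchanged, and only the single arriving packet can alter $N$, contributing $+1$ if its priority falls strictly between $i_1$ and $i_2$ and $0$ otherwise. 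Thus $N(t)-N(t-1)\in\{0,1\}$, which yields the claim.

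The only delicate point---and the reason the bound is $1$ rather than the $2$ one would obtain by naively combining the two single-tag estimates of Lemma~\ref{lem:tagchanging}---is precisely this extremality observation: a departure or loss always removes a packet at one of the two priority extremes, never one lying between $i_1$ and $i_2$, so at most the lone arrival can push the two tags apart. In the write-up I would take care to fix the timing convention (the packet removed during slot $t-1$ is that slot's departure or loss, while the packet added is the arrival of slot $t$) and to confirm that neither $i_1$ nor $i_2$ can be the removed or the added packet, which follows from their being present at both $t-1$ and $t$. An alternative, more computational route mirrors the case split in the proof of Lemma~\ref{lem:tagchanging}: when there is no arrival at $t$ every surviving tag shifts by the same amount ($-1$ under a departure, $0$ otherwise), so the difference is unchanged, and when there is an arrival one checks the few subcases according to whether the arrival and a possible departure sit above, below, or between $i_1$ and $i_2$. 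I expect the between-counting argument to give the cleaner exposition.
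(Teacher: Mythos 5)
Your proof is correct, and it takes a genuinely different route from the paper's. The paper proves this lemma by the same kind of exhaustive case analysis it uses for Lemma~\ref{lem:tagchanging}: it splits on whether a packet arrives at $t$ and on where the arrival's priority sits relative to $i_1$ and $i_2$ (below both, above both, or between them), and within each case on whether a departure occurred at $t-1$, then writes down the exact increments of $\tau_{i_1}$ and $\tau_{i_2}$ and checks that their difference moves by at most one. You instead identify the invariant $\tau_{i_2}(s)-\tau_{i_1}(s)=N(s)+1$, where $N(s)$ counts packets with priority strictly between the two, and observe that the only removals between consecutive tag snapshots are the slot-$(t-1)$ departure or loss, which by (P4) and (P5) are priority-extremal among a set containing both $i_1$ and $i_2$ and hence never lie strictly between them; only the single arrival at $t$ can perturb $N$. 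Your argument is the more conceptual one: it makes transparent exactly why the bound is $1$ rather than the $2$ obtained by applying Lemma~\ref{lem:tagchanging} twice, and it handles the loss event explicitly (the paper's case analysis leaves the loss at $t-1$ implicit, which is harmless since the lost packet is lowest-priority). The paper's version is more mechanical but self-contained and parallel in style to its neighboring lemma. Both rely on the same induction hypothesis that (P1)--(P5) hold up to time $T-1$, and your timing convention (remove the slot-$(t-1)$ departure or loss, add the slot-$t$ arrival) matches the paper's definition of the tag as a rank over the $q(t-1)+a(t)$ packets present at the beginning of $t$.
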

\begin{proof}
 We consider all the four possible cases:

  Case 1: there is no packet arriving at the switching system at $t$. If there exists a departure packet at time $t-1$, then $\tau_{i_1}(t)=\tau_{i_1}(t-1)-1$ and $\tau_{i_2}(t)=\tau_{i_2}(t-1)-1$. Otherwise, $\tau_{i_1}(t)=\tau_{i_1}(t-1)$ and $\tau_{i_2}(t)=\tau_{i_2}(t-1)$. For both of the subcases, \eqref{eq:diff} holds.

  Case 2: there is a packet arriving at the switching system at $t$, which has a lower priority than both $i_1$ and $i_2$. Clearly, the argument for case 1 also holds.

  Case 3: there is a packet arriving at the switching system at $t$, which has a higher priority than both $i_1$ and $i_2$. If there exists a departure packet at time $t-1$, then $\tau_{i_1}(t)=\tau_{i_1}(t-1)$ and $\tau_{i_2}(t)=\tau_{i_2}(t-1)$. Otherwise, $\tau_{i_1}(t)=\tau_{i_1}(t-1)+1$ and $\tau_{i_2}(t)=\tau_{i_2}(t-1)+1$. Hence, \eqref{eq:diff} holds.

  Case 4: there is a packet arriving at the switching system at $t$, which has a higher priority than $i_1$ but lower than $i_2$ (without loss of generality, we here assume $i_1$ has a lower priority than $i_2$). If there exists a departure packet at time $t-1$, then $\tau_{i_1}(t)=\tau_{i_1}(t-1)$ and $\tau_{i_2}(t)=\tau_{i_2}(t-1)-1$. Otherwise, $\tau_{i_1}(t)=\tau_{i_1}(t-1)+1$ and  $\tau_{i_2}(t)=\tau_{i_2}(t-1)$. Hence, \eqref{eq:diff} also holds in this case.
\end{proof}

From Lemma~\ref{lem:range}, we can see that the number of packets buffered at the $j$-th group of multiplexer is at most $|\Psi_j|+2B_j-2$, which is equal to $3B_j-2$ if $j=1$ or $2\ell-1$, or $4B_j-2$ if $2\leq j\leq 2\ell-2$. This bound can be improved to $3B_j-2$ for any $j$ by the following result.

\begin{lemma}
\label{lem:multiplexersize}
  For any two packets $i_1$ and $i_2$ that are buffered at, or entering the $j$-th group of multiplexers at time $T$,
    \begin{equation*}
    |\tau_{i_1}(T)-\tau_{i_2}(T)|\leq 3B_j-2.
  \end{equation*}
\end{lemma}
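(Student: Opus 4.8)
The plan is to bound the tag spread by following, for each of the two packets, the last time it entered a multiplexer of the $j$-th group, and then to control how far the two tags can drift apart between that moment and time $T$. For $m=1,2$, let $t_m'\le T$ be the last time packet $i_m$ entered a multiplexer of the $j$-th group (take $t_m'=T$ if $i_m$ is entering at $T$). The routing policy \eqref{eq:routing} forces $\tau_{i_m}(t_m')\in\Psi_j$, and the residence argument already used in the proof of Lemma~\ref{lem:range}—properties (M2) and (M4) together with the fact that at most $B_j$ packets occupy the multiplexer when $i_m$ enters—gives $T-t_m'\le B_j-1$. Without loss of generality assume $t_1'\le t_2'$; then both packets are present in the system throughout $[t_2',T]$.

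The naive estimate here would apply Lemma~\ref{lem:range} to each packet separately, letting each tag drift by $B_j-1$ away from $\Psi_j$; since the two entry tags can already differ by $|\Psi_j|-1=2B_j-1$, this only yields $|\tau_{i_1}(T)-\tau_{i_2}(T)|\le (|\Psi_j|-1)+2(B_j-1)=4B_j-3$, which is too weak. The improvement comes from refusing to charge both residence intervals. Instead I would expand the difference around the later entry time $t_2'$: over $[t_2',T]$ both packets coexist, so summing Lemma~\ref{lem:diff} telescopically gives
\begin{equation*}
\bigl|(\tau_{i_1}(T)-\tau_{i_2}(T))-(\tau_{i_1}(t_2')-\tau_{i_2}(t_2'))\bigr|\le T-t_2'.
\end{equation*}
To reach $\tau_{i_1}(t_2')$ I would then pull $i_1$ back to its own entry time via Corollary~\ref{cor:tagchanging}, $|\tau_{i_1}(t_2')-\tau_{i_1}(t_1')|\le t_2'-t_1'$, and use $\tau_{i_1}(t_1'),\tau_{i_2}(t_2')\in\Psi_j$ to get $|\tau_{i_1}(t_1')-\tau_{i_2}(t_2')|\le |\Psi_j|-1$.

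Chaining these three inequalities with the triangle inequality yields
\begin{equation*}
|\tau_{i_1}(T)-\tau_{i_2}(T)|\le (|\Psi_j|-1)+(t_2'-t_1')+(T-t_2')=(|\Psi_j|-1)+(T-t_1'),
\end{equation*}
and since $T-t_1'\le B_j-1$ and $|\Psi_j|\le 2B_j$, the right-hand side is at most $|\Psi_j|+B_j-2\le 3B_j-2$. The crucial point—and the step I expect to be the real obstacle—is exactly the decision to bound the drift of the \emph{difference} over $[t_2',T]$ through Lemma~\ref{lem:diff} rather than the drift of each tag individually: this is what forces the two disjoint intervals $[t_1',t_2']$ and $[t_2',T]$ to contribute a single residence budget $T-t_1'\le B_j-1$ instead of two, collapsing the $4B_j$ estimate to $3B_j$. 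The remainder is bookkeeping: checking that the degenerate cases $t_1'=t_2'$ and $t_m'=T$ (a packet entering at $T$) are covered by the same chain, and that the residence bound $T-t_m'\le B_j-1$ is legitimate at $t=T$, which it is because it relies only on the multiplexer properties and on the absence of overflow at the entry time $t_m'<T$, the latter supplied by the induction hypothesis.
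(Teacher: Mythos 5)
Your proposal is correct and follows essentially the same route as the paper's proof: the paper likewise anchors the difference at the later entry time via Lemma~\ref{lem:diff}, pulls the earlier-entering packet back to its own entry time with Corollary~\ref{cor:tagchanging}, uses $\tau\in\Psi_j$ at both entry times to get the $|\Psi_j|-1\le 2B_j-1$ term, and closes with the single residence budget $T-t\le B_j-1$ measured from the earlier entry. The only difference is the labeling of which packet entered first; the three inequalities and their chaining are identical.
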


\begin{proof}
  Suppose that the time that packets $i_1$ and $i_2$ entered the $j$-th group of multiplexers for the last time before $T$ or at $T$ is $t_1$ and $t_2$. Without loss of generality, we assume that $t_2\leq t_1$. By Lemma~\ref{lem:diff}, we have
  \begin{equation}
  \label{eq:a}
    |\tau_{i_1}(T)-\tau_{i_2}(T)|\leq |\tau_{i_1}(t_1)-\tau_{i_2}(t_1)|+(T-t_1).
  \end{equation}
  By Corollary~\ref{cor:tagchanging}, we also have
  \begin{equation}
  \label{eq:b}
    |\tau_{i_2}(t_1)-\tau_{i_2}(t_2)|\leq t_1-t_2.
  \end{equation}
According to the routing policy, $\tau_{i_1}(t_1)\in \Psi_j$ and $\tau_{i_2}(t_2)\in \Psi_j$.  Since $\Psi_j$ consists of consecutive integers,
\begin{equation}
\label{eq:c}
  |\tau_{i_1}(t_1)-\tau_{i_2}(t_2)|\leq |\Psi_j|-1\leq 2B_j-1.
\end{equation}
Combining \eqref{eq:a}, \eqref{eq:b} and \eqref{eq:c}, we have
\begin{equation*}
 |\tau_{i_1}(T)-\tau_{i_2}(T)|\leq T-t_2+2B_j-1.
  \end{equation*}
  Note that $T-t_2\leq B_j-1$ since otherwise $i_2$ would leave the $j$-th group of multiplexers before $T$. This completes the proof.
\end{proof}

For $t<T$, let $q_j(i,t), i=0,1,2$ be the number of packets buffered at the $i$-th multiplexer in the $j$-th group of multiplexers at time $t$, and let $q_j(i,T), i=0,1,2$ be the number of packets buffered at or entering the $i$-th multiplexer in the $j$-th group of multiplexers at time $T$. Recall that, according to the routing policy, the input links of a group of multiplexers are used in a round-robin manner. Based on this scheme and together with the non-idling property (M2) of multiplexers, we can show that the buffers of the multiplexers in a same group are always almost equally used, i.e., the number of packets buffering in the multiplexers differs by at most one. Besides, if the input link that is lastly used before $t$ belongs to the $i$-th multiplexer in the group, then at time $t-1$, the number of packets buffered in the $i$-th multiplexer at time $t-1$ is equal to or larger than that in the $((i-1)\%3)$-th multiplexer, which is also equal to or larger than that in the $((i-2)\%3)$-th multiplexer.

\begin{lemma}
\label{lem:balance}
  Consider any $j$-th group of multiplexers, and let $k(t)=u_j(t+1) \% 3$. Then, for $t\leq T$,
  \begin{equation}
  \label{eq:1}
  q_j(k(t),t)\geq q_j((k(t)-1)\% 3,t)\geq q_j((k(t)-2)\% 3,t),
\end{equation}
and
\begin{equation}
\label{eq:2}
  q_j(k(t),t)- q_j((k(t)-2)\% 3,t)\leq 1.
\end{equation}
\end{lemma}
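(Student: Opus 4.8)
The plan is to prove both \eqref{eq:1} and \eqref{eq:2} simultaneously by induction on $t$, starting from the empty system at $t=0$ (where $u_j(1)=0$, so $k(0)=0$ and all three occupancies are $0$, making both inequalities trivial). For the inductive step I would first record the per-multiplexer dynamics implied by (M1) and (M2): writing $\mathrm{arr}_i(t)$ for the number of packets the routing policy feeds into the $i$-th multiplexer of the group at time $t$, each nonempty multiplexer serves exactly one packet, so
\begin{equation*}
  q_j(i,t)=\max\bigl(0,\,q_j(i,t-1)+\mathrm{arr}_i(t)-1\bigr),
\end{equation*}
possibly further reduced to the cap $B_j$ by losses (M3). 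Since both $\max(0,\cdot)$ and the capping $\min(B_j,\cdot)$ are non-decreasing and do not increase differences, the capping can only help preserve the claimed ordering and gap, so it suffices to analyze the uncapped recursion; this is also how I would avoid any circular appeal to the (later) no-overflow lemma.

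The key structural observation I would isolate is the self-balancing nature of the round-robin input assignment. Let $s=u_j(t)\%3=k(t-1)$, and abbreviate the three multiplexers in the order dictated by the inductive hypothesis at $t-1$ as $P=s$, $Q=(s-1)\%3$, $R=(s-2)\%3$, so that $q_j(P,t-1)\geq q_j(Q,t-1)\geq q_j(R,t-1)$ with gap at most $1$. Writing $k=3a+b$ with $b\in\{0,1,2\}$ for the number of packets entering the group at $t$, the policy feeds them starting at input link $u_j(t)+1$, i.e. starting at multiplexer $(s+1)\%3=R$; hence $\mathrm{arr}_R=a+[b\geq1]$, $\mathrm{arr}_Q=a+[b\geq2]$, $\mathrm{arr}_P=a$. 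The crucial point is that the multiplexer with the \emph{smallest} occupancy, $R$, receives the \emph{most} arrivals, while the arrival counts themselves differ by at most one; this is exactly the compensation that keeps the occupancies within a gap of one. Moreover $k(t)=u_j(t+1)\%3=(s+b)\%3$, which equals $P$, $R$, or $Q$ according to $b=0,1,2$, so the identity of the new ``most recently used'' multiplexer is pinned down in each case.

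I would then verify \eqref{eq:1} and \eqref{eq:2} by substituting these arrival counts into the recursion and checking the three values against the required cyclic ordering centered at $k(t)$. When $a\geq1$ every argument of $\max(0,\cdot)$ is non-negative, the $\max$ is inert, and each of the subcases $b=0,1,2$ reduces to the inductive facts $q_j(P,t-1)-q_j(R,t-1)\leq1$ and $q_j(P,t-1)\geq q_j(Q,t-1)\geq q_j(R,t-1)$; for instance when $b=1$ one gets $q_j(R,t)=q_j(R,t-1)+a$ and $q_j(P,t)=q_j(P,t-1)+a-1$, so the extra arrival to $R$ promotes it to the top with $q_j(R,t)-q_j(Q,t)=q_j(R,t-1)-q_j(Q,t-1)+1\in\{0,1\}$, exactly matching the claim for $k(t)=R$. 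The remaining work is the boundary regime $a=0$, where some multiplexers may be empty and the $\max(0,\cdot)$ truncation becomes active; here I would use the gap bound to observe that an empty multiplexer forces the others to occupancy $0$ or $1$, and then check the same three subcases, noting that the single arriving packet ($b=1$) or the two arriving packets ($b=2$) land precisely on the lowest-occupancy multiplexers, so truncation never breaks the ordering or inflates the gap.

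The main obstacle is precisely this $a=0$ boundary bookkeeping together with the loss/capping interaction: away from the boundary the argument collapses to a one-line monotonicity statement, but near-empty (or near-full) multiplexers force a case-by-case confirmation that the round-robin assignment directs arrivals to exactly the multiplexers that need them, so that neither the $\max(0,\cdot)$ floor nor the $\min(B_j,\cdot)$ ceiling can produce a gap exceeding one or invert the ordering tied to $k(t)$. I would organize these as the six pairs $(a\in\{0,\geq1\},\,b\in\{0,1,2\})$ and lean throughout on the monotone-and-contractive nature of both truncations to close the induction without presupposing that losses vanish.
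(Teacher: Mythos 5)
Your proof is correct and follows essentially the same route as the paper: induction on $t$, the per-multiplexer recursion $q_j(i,t)=[\,q_j(i,t-1)+m(i,t)-1\,]^+$, and a case split on the residue mod $3$ of the number of packets entering the group, exploiting that the round-robin assignment sends the extra arrivals precisely to the least-loaded multiplexers under the inductive ordering. The only (harmless) presentational difference is that you keep the $\min(B_j,\cdot)$ cap and dispose of it by monotonicity/contractivity, whereas the paper drops it by observing that no multiplexer loss occurs for $t<T$ since the system emulates a priority queue up to time $T-1$.
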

\begin{proof}
  We prove this result by induction on $t$. If $t=0$, then $k(t)=0$, and $q_j(0,0)=q_j(1,0)=q_j(2,0)=0$. So, \eqref{eq:1} and \eqref{eq:2} hold.

  Now suppose \eqref{eq:1} and \eqref{eq:2} hold when $t=t_0-1\leq T-1$. We will show that they also hold for $t=t_0$.
  Without loss of generality, we assume that $k(t_0-1)=0$ (the cases that $k(t_0-1)=1$ and $k(t_0-1)=2$ can be considered in a same way). By induction hypothesis, we have
  \begin{equation}
  \label{eq:3}
  q_j(0,t_0-1)\geq q_j(2, t_0-1)\geq q_j(1,t_0-1)
  \end{equation}
  and
  \begin{equation}
  \label{eq:4}
  q_j(0,t_0-1)-q_j(1,t_0-1)\leq 1.
  \end{equation}
   Let $m(i,t_0)$, $i=0,1,2$, denote the number of packets arriving at the $i$-th multiplexer in the group at time $t_0$.
   Then,
    \begin{align*}
    k(t_0)=& u_j(t_0+1)\% 3\\
    =& \left ( u_j(t_0)+\sum_{i=0}^2 m(i,t_0)\right)\% 3\\
    =&\left(\sum_{i=0}^2 m(i,t_0)\right)\% 3
  \end{align*}
  where the second equality holds according to the routing policy and the last equality holds since $u_j(t_0)\% 3=k(t_0-1)=0$.
Since the routing policy uses the multiplexers in a same group in a round robin manner, we have
   \begin{equation}
   \label{eq:aa}
     m(k(t_0),t_0)\geq m((k(t_0)-1)\% 3, t_0) \geq m((k(t_0)-2)\% 3, t_0),
   \end{equation}
   and
   \begin{equation}
   \label{eq:bb}
     m(k(t_0),t_0)-m((k(t_0)-2)\% 3, t_0)\leq 1.
   \end{equation}
   Since the switching system emulates the priority queue for each $t<T$, there is no packet lost at time $t_0$ if $t_0<T$. Hence, according to (M1) and (M2),
  \begin{equation}
  \label{eq:5}
    q_j(i,t_0)=[q_j(i,t_0-1)+m(i,t_0)-1]^+, i=0,1,2.
  \end{equation}
  Clearly, the above equality also holds for $t_0=T$ due to the definition of $q_j(i,T)$.
  We consider three possible cases:

  Case 1: $\sum_{i=0}^2 m(i,t_0)\% 3=0$. Then $k(t_0)=0$. By \eqref{eq:aa} and \eqref{eq:bb},
   $m(0,t_0)=m(1,t_0)=m(2,t_0)$. From \eqref{eq:3}, \eqref{eq:4} and \eqref{eq:5}, it is straightforward to see that $q_j(0,t_0)\geq q_j(2,t_0)\geq q_j(1,t_0)$ and $q_j(0,t_0)-q_j(1,t_0)\leq 1$.

  Case 2: $\sum_{i=0}^2 m(i,t_0)\% 3=1$. Then $k(t_0)=1$. By \eqref{eq:aa} and \eqref{eq:bb}, $m(1,t_0)=m(0,t_0)+1=m(2,t_0)+1$, and. From \eqref{eq:3}, \eqref{eq:4} and \eqref{eq:5}, we can get $q_j(1,t_0)\geq q_j(0,t_0)\geq q_j(2,t_0)$ and $ q_j(1,t_0)-q_j(2,t_0)\leq 1$.

  Case 3: $\sum_{i=0}^2 m(i,t_0)\% 3=2$. Then $k(t_0)=2$. By \eqref{eq:aa} and \eqref{eq:bb}, $m(1,t_0)=m(2,t_0)=m(0,t_0)+1$, and $k(t_0)=2$. From \eqref{eq:3}, \eqref{eq:4} and \eqref{eq:5}, we have $q_j(2,t_0)\geq q_j(1,t_0)\geq q_j(0,t_0)$ and $ q_j(2,t_0)-q_j(0,t_0)\leq 1$.

Hence, for each case, we have \eqref{eq:1} and \eqref{eq:2} for $t=t_0$. By mathematical induction, \eqref{eq:1} and \eqref{eq:2} hold for $t\leq T$.
\end{proof}

%
%

\begin{lemma}
\label{lem:nooverflow}
Any packet arriving at any multiplexer in the switching system at time $T$ cannot be lost.
\end{lemma}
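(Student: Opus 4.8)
The plan is to show that the number of packets present at (or entering) the $j$-th group of multiplexers at time $T$ never exceeds the total buffer capacity $3B_j$ of the group, and then invoke the consecutive/balanced usage of the three multiplexers to conclude that no individual multiplexer overflows. Concretely, I would first bound the total occupancy of the $j$-th group. By Lemma~\ref{lem:multiplexersize}, any two packets that are buffered at, or entering, the $j$-th group at time $T$ have tags differing by at most $3B_j-2$. Since packets carry distinct priorities and hence distinct tags (the tags of distinct packets are distinct integers at any fixed time), the number of such packets is at most $(3B_j-2)+1 = 3B_j-1$. Thus the total number of packets that the $j$-th group must accommodate at time $T$ is at most $3B_j-1 < 3B_j$, which is exactly the aggregate buffer of the three parallel multiplexers each of buffer $B_j$.

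Next I would translate this aggregate bound into a per-multiplexer bound. Let $q_j(0,T),q_j(1,T),q_j(2,T)$ be the occupancies (including packets entering at $T$) of the three multiplexers in the group, as defined just before Lemma~\ref{lem:balance}. By Lemma~\ref{lem:balance}, inequalities~\eqref{eq:1} and~\eqref{eq:2} hold at $t=T$, so the three values are within one of each other: the largest is $q_j(k(T),T)$ and the smallest is $q_j((k(T)-2)\%3,T)$, and they differ by at most $1$. If some multiplexer held strictly more than $B_j$ packets, i.e.\ at least $B_j+1$, then by~\eqref{eq:1} and~\eqref{eq:2} each of the other two would hold at least $B_j$, giving a total of at least $3B_j+1$ packets at the group. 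This contradicts the aggregate bound $3B_j-1$ established in the previous step. Hence every multiplexer holds at most $B_j$ packets at time $T$, so the $(B_j+1)$-th buffer slot is never demanded and, by property~(M3) of the multiplexer, no arriving packet at any multiplexer is lost at time $T$.

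I would be careful about two technical points. First, the quantities $q_j(i,T)$ are defined to \emph{include} the packets entering at time $T$ (not merely those already buffered), which is precisely what makes the overflow condition in~(M3) coincide with ``$q_j(i,T)>B_j$''; I would state this explicitly so that the balance inequalities of Lemma~\ref{lem:balance} (which are asserted for $t\leq T$ with this same convention) apply directly at $t=T$. Second, the feasibility of routing at time $T$ presupposes collision-freeness, which is guaranteed by Lemma~\ref{lem:nocollision}, so the counts $q_j(i,T)$ are well-defined and the round-robin update is legitimate.

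The main obstacle, and the crux of the argument, is the combinatorial step converting the aggregate capacity bound into a per-multiplexer guarantee: one must use both halves of Lemma~\ref{lem:balance} simultaneously — the ordering~\eqref{eq:1} to know the overflowing multiplexer is the most-loaded one and the spread bound~\eqref{eq:2} to force the other two up near the same level — in order to derive the total $3B_j+1$ that contradicts $3B_j-1$. All the deeper work (tag-spread control and balanced usage) has already been done in Lemmas~\ref{lem:multiplexersize} and~\ref{lem:balance}; the remaining difficulty is purely in assembling these pieces with the correct counting and the correct handling of the ``entering at $T$'' convention, rather than in any new estimate.
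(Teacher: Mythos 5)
Your proposal is correct and follows essentially the same argument as the paper: assume overflow at one multiplexer, use Lemma~\ref{lem:balance} to force the other two up to at least $B_j$ each (total $\geq 3B_j+1$), and contradict the aggregate bound $3B_j-1$ obtained from the tag-spread bound of Lemma~\ref{lem:multiplexersize}. The only difference is that you spell out the counting step (distinct integer tags within a window of width $3B_j-2$ give at most $3B_j-1$ packets), which the paper leaves implicit.
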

\begin{proof}
By contradiction, we assume that there exists some packet arriving at $i$-th multiplexer in the $j$-th group lost due to overflow. According to (M3), $q_j(i,T)>B_j$. By Lemma~\ref{lem:balance}, this implies that $q_j(i',T)\geq B_j$ for $i'\neq i, i'\in \{1,2,3\}$. Hence, $q_j(1,T)+q_j(2,T)+q_j(3,T)>3B_j$. On the other hand, Lemma~\ref{lem:multiplexersize} implies $q_j(1,T)+q_j(2,T)+q_j(3,T)\leq 3B_j-1$, which leads to a contradiction. The proof is accomplished.
\end{proof}

%
%

\subsection{Completing the Proof}

Now we complete the proof of Theorem~\ref{thm:emulation}. We will show that all the five properties (P1)-(P5) hold at time $T$. First, according to Lemma~\ref{lem:nocollision} and Lemma~\ref{lem:nooverflow}, (P1) holds directly.

To prove (P2) and (P4), we can assume, without loss of generality, that $c(T)=1$ and $q(T-1)+a(T)>0$. Consider the packet $i$ that $\tau_i(T)=1$. If it is the arriving packet, then according to the routing policy, (P2) and (P4) hold directly. If otherwise, $\tau_i(T-1)=1$ or $\tau_i(T-1)=2$ according to Lemma~\ref{lem:tagchanging}. By Lemma~\ref{lem:range}, we can check that $i$ must be buffered at the first group of multiplexers or at the second group of multiplexers at time $T-1$. Recall that the buffer size of each multiplexer in the first group or in the second group is just one. By property (M2), packet $i$ will leave the corresponding multiplexer and enter the switch at time $T$. Hence, according to the routing policy, (P2) and (P4) hold.

(P3) and (P5) can be proved similar to (P2) and (P4). Suppose that there is no departure request and there is an arriving packet at time $T$, while the number of packets buffered in the switching system at time $T-1$ is $B^*$. Consider the packet $i$ that $\tau_i(T)=B^*+1$. If $i$ is the arriving packet, then according to the routing policy, $i$ will be dropped via the loss link at $T$. Hence, (P3) and (P5) hold. If otherwise, then according to Lemma~\ref{lem:tagchanging}, $\tau_i(T-1)=B^*$. By Lemma~\ref{lem:range}, $i$ was buffered at the last group of multiplexers. Recall that the buffer size of each multiplexer in the last group is just one. By property (M2), packet $i$ will leave the corresponding multiplexer at time $T$. According to the routing policy, $i$ will be dropped via the loss link at $T$. Hence, (P3) and (P5) hold at $T$.

The whole proof is accomplished.

\section{Related Work}
\label{sec:relatedwork}

Many methods have been developed for using the SDL-based constructions to exactly emulate various electronic queue structures. Here we introduce the constructions of some typical SDL-based optical components.
\begin{itemize}
  \item \emph{FIFO multiplexers:} In \cite{cruz1996cod}, a design named COD (Cascaded Optical Delay-lines) was proposed for exactly emulating 2-to-1 FIFO multiplexers by using $2\times 2$ crossbar switches and FDLs. However, the number of switches in COD is linear in the buffer size. An improved design named Logarithm Delay-Line Switched was proposed in \cite{Hunter19972} where the number of $2\times 2$ switches used is only logarithmic in the buffer size. In \cite{chang2004recursive}, a recursive construction of 2-to-1 multiplexer was introduced, which was further extended to constructing $n$-to-1 multiplexers using self-routing. In \cite{chou2006necessary}, it was proposed that an $(M+2)\times (M+2)$ crossbar switch and $M=O(\log B)$ FDLs are sufficient to emulate a 2-to-1 multiplexer with buffer $B$. Based on these works, some other constraints including fault-tolerance~\cite{cheng2007constructions}, variable length burst~\cite{chang2006using}, and limited number of recirculations~\cite{cheng2008constructions} are also taken into account for constructing 2-to-1 multiplexers. Since FIFO multiplexers admit efficient SDL based constructions and have some salient properties that FDLs do not have, they can be exploited in the design of optical priority queues, as firstly demonstrated in this work.

  \item \emph{FIFO and LIFO queues:} In \cite{chang2006constructions}, a recursive construction for FIFO queue was  proposed which uses $2\log_2 B-1$ FDLs, where $B$ is the buffer size. In \cite{cheng2013necessary}, a necessary and sufficient condition was characterized for SDL constructions of FIFO queues. In \cite{small2007modular}, a cascade optical LIFO queue architecture based on multiple building-block modules was developed, but its capacity of each module is highly limited. In \cite{huang2007recursive}, the idea of two-level caching was proposed, based on which recursive constructions of parallel FIFO and LIFO queues are proposed. The result in \cite{huang2007recursive} indicate that a LIFO queue of size $B$ can be constructed using at most $9\log_2 B$ FDLs. An improved design was proposed in \cite{wang2011efficient}, which only uses approximately $3\log_2 B$ FDLs. Although FIFO and LIFO queues can be viewed as special cases of priority queues, existing ideas for constructing FIFO and LIFO queues cannot be easily extended for constructing priority queues.

  \item \emph{Priority queues:} In \cite{sarwate2006exact}, Sarwate and Anatharam firstly considered the SDL-based construction of optical priority queues. They showed the buffer size is upper bounded by $2^M+1$, where $M$ is the number of FDLs, and gave a construction of an optical priority queue with $\Theta(M^2)$ buffer. A more general construction framework based on the notion of complementary priority queue was proposed in \cite{chiu2007simple}. Using this framework, an improved design of optical priority queue with $\Theta(M^3)$ buffer was proposed in \cite{chiu2007using}. These results were extended to the construction of optical priority queues with multiple inputs and multiple outputs in \cite{cheng2011constructions}. Very recently, a recursive construction of optical priority queue was proposed which can achieve a buffer size of $\Theta(M^c)$ for any positive integer $c$. All these constructions considered the exact emulation of optical priority queues. In contrast,  ``strong" emulation of optical priority queue was considered in \cite{kogan2007optimal} where each packet departs from the construction with bounded delay.
\end{itemize}

\section{Conclusion}
\label{sec:conclusion}

We have proposed a novel construction of an optical priority queue with buffer $2^{\Theta (\sqrt{M})}$ using a single optical crossbar switch and $M$ FDLs, which leverages 4-to-1 multiplexers for feeding back packets to the switch, and adopts a routing policy that is similar to self-routing. This is a substantial improvement over all previous constructions of optical priority queues which only have polynomial-size buffers. In the future, we would make further efforts towards closing the remaining gap between the exponential upper bound in \cite{sarwate2006exact} and the established sub-exponential lower bound for the SDL design of priority queues. We would also like to see whether our method can be extended to achieve better designs of other network elements (e.g., optical priority queues with multiple inputs and multiple outputs~\cite{cheng2011constructions}).

\ifCLASSOPTIONcaptionsoff
  \newpage
\fi



%
\bibliographystyle{IEEEtran}
\bibliography{IEEEabrv,reference}

%

%
%
%




\end{document}